\providecommand{\algorithmname}{Algorithm}
\theoremstyle{plain}
\theoremstyle{definition}
\theoremstyle{plain}
\theoremstyle{remark}
\newtheorem{rem}{\protect\remarkname}
\theoremstyle{remark}
\newtheorem{lm}{\protect\lemmaname}
\DeclareMathOperator*{\maxi}{max}
\DeclareMathOperator*{\st}{s.t.}
\newcommand{\herm}{^{H}}
\newcommand{\trans}{^{T}}
\renewcommand{\Re}{\mathrm{Re}}
\newcommand*{\rom}[1]{\expandafter\@slowromancap\romannumeral #1@}
\providecommand{\definitionname}{Definition}
\providecommand{\factname}{Fact}
\providecommand{\remarkname}{Remark}
\providecommand{\theoremname}{Theorem}
\providecommand{\lemmaname}{Lemma}
\mathchardef\mhyphen="2D
\begin{document}
\title{Energy-Efficient Multi-Cell Multigroup Multicasting with Joint Beamforming and Antenna Selection}
\author{Oskari~Tervo,~\IEEEmembership{Student~Member,~IEEE,}
        Le-Nam Tran,~\IEEEmembership{Member,~IEEE,}
        Harri Pennanen,~\IEEEmembership{Member,~IEEE,}
        Symeon Chatzinotas,~\IEEEmembership{Senior Member,~IEEE,}
        Bj\"orn Ottersten,~\IEEEmembership{Fellow,~IEEE,}
        and~Markku~Juntti,~\IEEEmembership{Senior~Member,~IEEE}
\thanks{This research has been financially supported by Academy of Finland 6Genesis Flagship (grant 318927). It was also supported in part by Infotech Oulu Doctoral Program and the Academy of Finland under projects MESIC belonging to the WiFIUS program with NSF, and WiConIE. It has also emanated from research supported in part by a Grant from Science Foundation Ireland under Grant number 17/CDA/4786. It was further supported by projects PROSAT, SATSENT, INWIPNET and H2020 SANSA. The first author has been supported by Oulu University Scholarship Foundation, Nokia Foundation, Tauno T{\"o}nning Foundation, and Walter Ahlstr{\"o}m Foundation.}
\thanks{O. Tervo was with Centre for Wireless Communications, University of Oulu, Finland, and is now with Nokia Bell Labs, Finland. Email: oskari.tervo@nokia-bell-labs.com.}
\thanks{H. Pennanen and M. Juntti are with Centre for Wireless Communications, University of Oulu, Finland. Email: \{harri.pennanen,markku.juntti\}@oulu.fi.}
\thanks{L.-N. Tran is with the School of Electrical and Electronic Engineering, University College Dublin, Ireland. Email: nam.tran@ucd.ie.}
\thanks{S. Chatzinotas and B. Ottersten are with the Interdisciplinary Centre for Security, Reliability and Trust, University of Luxembourg, Luxembourg. Email: \{symeon.chatzinotas,bjorn.ottersten\}@uni.lu.}
}


\maketitle

\begin{abstract}
This paper studies the energy efficiency and sum rate trade-off for coordinated beamforming in multi-cell multi-user multigroup multicast multiple-input single-output systems. We first consider a conventional network energy efficiency maximization (EEmax) problem by jointly optimizing the transmit beamformers and antennas selected to be used in transmission. We also account for  per-antenna maximum power constraints to avoid non-linear distortion in power amplifiers and user-specific minimum rate constraints to guarantee certain service levels and fairness. To be energy-efficient, transmit antenna selection is employed. It eventually leads to a  mixed-Boolean fractional program.  We then propose two different approaches to solve this difficult problem. The first solution is based on a novel modeling  technique that produces a tight continuous relaxation.
%
%
The second approach is based on sparsity-inducing method, which does not require the introduction of any Boolean variable. We also investigate the trade-off between the energy efficiency and sum rate  by proposing two different formulations. In the first formulation, we propose a new metric that is the ratio of the sum rate and the so-called weighted power. Specifically, this metric reduces to EEmax when the weight is 1, and to sum rate maximization when the weight is 0. In the other method, we treat the trade-off problem as  a multi-objective optimization for which a scalarization approach is adopted. Numerical results illustrate significant achievable energy efficiency gains over the method where the antenna selection is not employed. The effect of antenna selection on the energy efficiency and sum rate trade-off is also  demonstrated.


\end{abstract}

\begin{IEEEkeywords}
Coordinated beamforming, energy efficiency, successive convex approximation, fractional programming, antenna selection, multicasting, multi-objective optimization.
\end{IEEEkeywords}

\section{\label{sec:intro}Introduction}

Achieving high energy efficiency (EE) and spectral efficiency (SE) is vital to future wireless communications standards. SE maximization has  driven  cellular networks to employ   aggressive frequency reuse. Essentially, different base stations (BSs) transmit data in the same frequency spectrum, resulting in severe inter-user interference conditions. In this case, multi-antenna system can exploit beamforming to control the interference for efficient spectrum utilization. An efficient method in this regard is coordinated beamforming \cite{Irmer-11}, where the base stations design the beams in a coordinated manner.

\textcolor{black}{Previous studies have shown that EE and SE are conflicting targets \cite{Xiong-11,He-13EESE,Meshkati-07,Chen-11,Tervo-15}, if the power consumption due to additional hardware caused by increasing the number of antennas is taken into account \cite{FoundAndTrends-17}. More specifically, it may be energy-efficient to transmit with a small number of antennas if the power cost due to an active antenna is large, and, thus, the spectral efficiency can be low \cite{Tervo-15}. On the other hand, the SE maximization requires that  base stations are equipped with a large number of antennas to avail of spatial diversity. This increases the network power consumption and starts to reduce EE when the cost due to the power consumption increase exceeds the benefit of the SE increase \cite{Bjornson-15}.} \textcolor{black}{If the number of antennas is fixed and we wish to use conventional digital beamforming, then we could not adjust the RF chain power consumption and the EE-SE could be adjusted only by changing the beamformers (i.e., changing the transmit power as a result). On the other hand, a good option to trade-off the two design targets is to use antenna selection techniques. Specifically, depending on the required data rates, one could switch off some RF chains to save power. In this regard, we could generally install a large number of antennas, and then use a proper antenna selection scheme to control EE-SE trade-off together with beamforming. Consequently, we can achieve a better trade-off  compared to the conventional method because both RF chain and transmit powers can be adjusted. The antenna selection reduces both the power consumption and the SE, but when it is optimized for EE together with beamforming, one achieves ideally increasing EE as a function of the number of antennas.} This idea motivates the joint optimization of both transmit beamformers and active transmit antennas \cite{Tervo-15,Mehanna-13}.
In practice, both EE and SE performance measures are important for mobile network operators, depending on the user distribution and service requirements. To this end, the energy and spectral efficiency trade-off problem has been considered in the recent literature \cite{Xiong-11,He-13EESE}.

The evolution of mobile handsets and the associated applications is creating a new type of wireless communication scenario. A large part of the requested data traffic from users is highly correlated, especially in crowded areas, e.g., in stadiums. To deal with such situations, multicasting has received special  attention as a promising solution \cite{Vazquez-16,Pen-16Sat,Sidiropoulos-06,Karipidis-08,Christopoulos-14,FrameBased-15,Mehanna-13,Xiang-13,He-15,Tervo-17MIMO,Tervo-17MinPower}. The idea is to transmit the same information to multiple users as a single transmission, and it has become increasingly popular in the context of cache-enabled cloud radio access networks (C-RANs) proposed for 5G systems to improve both spectral and energy efficiency \cite{Tao-15}.

\subsection{Related Work}


Energy and spectral efficiency trade-off problems have been studied in different works. \textcolor{black}{In \cite{Xiong-11}, fundamental EE-SE trade-offs were studied for joint power and subcarrier allocation in a single-cell single-input single-output (SISO) downlink orthogonal frequency-division multiple access (OFDMA) system. A distributed antenna  system (DAS) with single-antenna nodes was considered in \cite{He-13EESE}, where a weighted sum method was proposed to solve the multi-criteria optimization problem. In \cite{Tang-16}, joint beamforming and subcarrier allocation for single-cell SISO downlink systems was studied. A weighted sum approach was proposed to the trade-off problem in terms of resource efficiency, which involves a normalization factor to balance the values of EE and SE. A single-cell OFDMA system with imperfect CSI was considered in \cite{Amin-16}. In \cite{Tang-14}, the EE-SE trade-off was investigated in a single-cell  multiple-input multiple-output (MIMO) OFDMA system and the authors considered non-linear dirty paper coding (DPC) with antenna and subcarrier selection.} However, all these previous studies focus on unicasting and mostly SISO transmission, where each user is assigned an independent data stream. Although \cite{Tang-14} focused on a MIMO case, the use of DPC makes it difficult to implement in reality.

Beamforming design for multicasting  has been studied for single-cell systems for different optimization targets, e.g., transmit power minimization \cite{Sidiropoulos-06,Karipidis-08,Tervo-17MinPower}, max-min fairness \cite{Karipidis-08,Christopoulos-14,Tervo-17MinPower}, and sum rate maximization \cite{FrameBased-15}. Joint beamforming and antenna selection for transmit power minimization was studied in \cite{Mehanna-13}.
Coordinated multicast beamforming for transmit power minimization and max-min fairness has been studied in \cite{Xiang-13}. In \cite{Tervo-17MIMO}, energy-efficient joint unicasting and multicasting beamforming for multi-cell multi-user MIMO systems was considered. A method to solve the EE maximization problem in multi-cell system with single group per cell was proposed in \cite{He-15}. However, both \cite{He-15} and \cite{Tervo-17MIMO} only considered the beamforming problem without taking into account the fact that significant energy  savings can be achieved by switching off some of the RF chains, i.e., antenna selection. Moreover, the works of \cite{He-15,Tervo-17MIMO} only considered the case of sum power constraints, while the case of antenna-specific power constraints has to be handled differently.


\subsection{Contributions}

In this paper, we study energy-efficient coordinated beamforming in multi-cell multigroup multi-user multicast multiple-input single-output (MISO) systems. Each transmit antenna is subject to an individual maximum power constraint and each user is guaranteed with a minimum data rate. We focus on a case where the number of antennas is relatively large, so that there is a potential to switch off some of the transmit antennas to improve the energy efficiency. In this setup, we consider the joint optimization of beamforming and antenna selection, where novel and clever formulations and transformations are proposed so that widely used standard optimization techniques can be applied to solve the problem efficiently.
Specifically, two different approaches are proposed. In the first one, we introduce Boolean antenna selection variables and use a novel extension of the perspective formulation \cite{Gunluk-10,MINLP} to model the per antenna power constraints. In particular, a specific parameter is introduced to control the  tightness of the  continuous relaxation which is crucial to finding a high-quality feasible solution. Since the continuous relaxation is nonconvex, we propose a successive convex approximation (SCA) based algorithm to solve it. By novel transformations, the subproblems obtained at each iteration of the proposed method can be approximated as a second-order cone program (SOCP) for which modern convex solvers are particularly efficient.
The second direction is based on a sparse beamforming approach where the idea is to directly find sparse beamforming solutions without requiring any additional variables compared to the original beamforming design problem without antenna selection. We propose different convex and non-convex smoothing functions to approximate the $\ell_0$-`norm', and again employ SCA to solve the problem.
The numerical results are provided to illustrate the convergence of the proposed algorithms for different system parameters and the achieved energy efficiency gains using the joint beamforming and antenna selection.

In the second part, we extend the joint design to the energy efficiency and sum rate trade-off  problem. In this case, the considered joint beamforming and antenna selection problem is specially relevant, because such a design certainly achieves a better trade-off curve due to extra degrees of freedom provided by the antenna selection. To formulate this multi-objective optimization problem, we propose two different approaches. First, we propose a new optimization metric, namely the power-weighted energy efficiency (PWEE) maximization, which involves a weighting parameter for the adjustable power consumption. The benefit of this approach is that the  algorithms  derived for EE maximization can be straightforwardly used to solve the PWEE problem. The other approach is attained via scalarization  where the sum rate function is appropriately scaled to achieve practical trade-off for the weighted sum of energy efficiency and sum rate. Due to the more difficult structure of the objective function, another set of approximated constraints is required compared to the EEmax problem to solve the problem. Numerical results demonstrate that both designs can exploit the trade-off and that joint beamforming and antenna selection achieves significantly wider trade-off curve compared to the case where only beamforming design is exploited.

\subsection{Organization and Notation}
The rest of the paper is organized as follows. Section \ref{sec:ProblemFormulation} presents the system model, power consumption model and the EE maximization problem. The proposed EE maximizing algorithms are provided in Sections \ref{sec:CentralizedMethods} and \ref{sec:Sparse}, while the trade-off problem is studied in  Section \ref{sec:tradeoff}. The numerical results and conclusions are presented in Sections \ref{sec:NumericalResults} and \ref{sec:Conclusions}, respectively.

The following notations are used in this paper. We denote by $|x|$ the cardinality of $x$ if $x$ is a set, and absolute value of $x$, otherwise. The $i$th component of vector $\mathbf{x}$ is denoted by $\mathbf{x}[i]$. Notation $||\mathbf{x}||_2$ is the Euclidean norm of $\mathbf{x}$, boldcase letters are vectors, $\mathbf{x}\trans, \mathbf{x}\herm, \Re(\mathbf{x})$ mean transpose, Hermitian transpose, and real part of $\mathbf{x}$, respectively.  For a positive integer $K$, $\mathcal{K}$ is defined as the set $\{1,\ldots,K\}$.

\section{\label{PF} System Model and Problem Formulation}
\label{sec:ProblemFormulation}
\subsection{System Model} 
A multi-cell multigroup multicasting system consisting of $B$ BSs is considered, \textcolor{black}{as illustrated in Fig. \ref{fig: MultiCastingSystemThesis}}. Each BS $b \in \mathcal{B}$ equipped with \textcolor{black}{$N_b=|\mathcal{N}_b| (\mathcal{N}_b=\{1,\ldots,N_b\})$} antennas, has $G_b$ multicasting groups to serve, that is, each group desires to receive independent information from its serving BS. The set of groups served by BS $b$ is denoted by $\mathcal{G}_b \subset \mathcal{G}$, where $\mathcal{G}$ is the set of all groups in the network. The total number of single-antenna users in the network is denoted by $K=|\mathcal{K}| (\mathcal{K}=\{1,\ldots,K\})$, while the user set belonging to group $g$ is denoted by $\mathcal{K}_g \subset \mathcal{K}$.
%
%
The serving BS of user group $g$ is denoted as $b_g$.
The sets of users belonging to different groups are disjoint, i.e., $\mathcal{K}_i \cap \mathcal{K}_j = \emptyset$, $\forall i,j \in \mathcal{G}, i \neq j$. In other words, each user is assumed to belong to one group only. Each group is further served by one BS only.
User $k$ in group $g$ receives the signal
\textcolor{black}{\begin{eqnarray} \label{eq:RxSignal}
{y}_{k} &=& \overbrace{\mathbf{h}_{b_g,k}\herm \mathbf{F}_{b_g} \mathbf{w}_{g} {s}_{g}}^\text{desired signal} + \overbrace{\sum\limits_{i \in \mathcal{G}_{b} \setminus \{g\}} \mathbf{h}_{b_g,k}\herm \mathbf{F}_{b_g} \mathbf{w}_{i} {s}_{i}}^\text{\textcolor{black}{inter-group interference from the same cell}} \nonumber \\
&& + \underbrace{\sum\limits_{j \in \mathcal{B} \setminus \{b_g\}} \sum\limits_{u \in \mathcal{G}_j} \mathbf{h}_{j,k}\herm \mathbf{F}_{b_u}\mathbf{w}_{u} {s}_{u}}_\text{\textcolor{black}{inter-group interference from the other cells}} + {n}_{k}
\end{eqnarray}
where $\mathbf{h}_{b,k} \in \mathbb{C}^{N_b \times 1}$ is the channel vector from BS $b$ to user $k$, $\mathbf{w}_{g} \in \mathbb{C}^{N_b \times 1}$ is the transmit beamforming vector of group $g$, ${s}_{g} \in \mathbb{C}$ is the corresponding independent normalized data symbol, ${n}_{k} \ \sim \mathcal{C} \mathcal{N} (0, \sigma^{2})$ is the complex white Gaussian noise sample with zero mean and variance $\sigma^{2}$,\footnote{\textcolor{black}{The noise variance is assumed to be same for all the users without loss of generality.}} and $\mathbf{F}_{b_u} \in \mathbb{R}^{N_b\times N_b}$ is the antenna selection matrix involving the $i$th unit vector at the $i$th column if the $i$th antenna is selected and otherwise a zero vector.} The channel vectors are assumed to be perfectly known at the transmitters, while the receivers are assumed to have perfect effective channel information to decode the data. The multigroup interference is treated as Gaussian noise, yielding the SINR of user $k$  as
%
%
\begin{equation}\label{eq:SINRexpression}
\hat{\Gamma}_k(\mathbf{w}) = \frac{|\mathbf{h}_{b_g,k}\herm \mathbf{F}_{b_g}\mathbf{w}_{g}|^{2}}{{N_0 + \sum\limits_{u \in \mathcal{G} \setminus \{g\}} |\mathbf{h}_{b_u,k}\herm \mathbf{F}_{b_u}\mathbf{w}_{u}|^{2}}}
\end{equation}
where $N_0$ is the total noise power over the transmission bandwidth $W$, \textcolor{black}{and $\mathbf{w}\triangleq\{\mathbf{w}_g\}_{g\in\mathcal{G}}$}.
As a result, the data rate towards user $k$ is given as
\begin{equation}
R_k(\mathbf{w}) \triangleq W\log(1+\hat{\Gamma}_k(\mathbf{w})).\footnote{Since the transmission bandwidth is fixed throughout the paper, it is discarded in the mathematical derivations for notational simplicity.}
\end{equation}

\begin{figure}[t]
\centering
  \includegraphics[width=0.7\columnwidth]{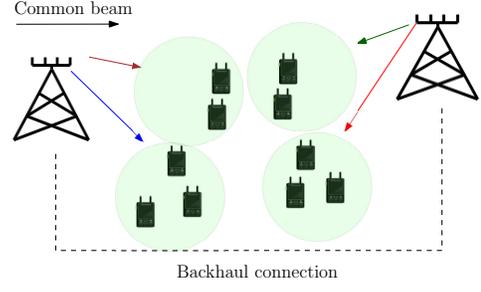}
  \caption{A multi-cell multigroup multicasting system.}
  \label{fig: MultiCastingSystemThesis}
\end{figure}

\subsection{Power Consumption Model}

In this paper the total power consumption is modeled as \textcolor{black}{\cite{Bjornson-15}}
\begin{equation}
\label{eq:powermodel}
\begin{aligned}
\hat{P}_{\text{tot}}= & \dfrac{1}{\eta}\sum\limits_{g\in\mathcal{G}}||\mathbf{F}_{b_g}\mathbf{w}_{g}||_2^{2} \\
&  +P_{\text{RF}}\sum\limits_{b\in\mathcal{B}}\sum\limits_{i \in \mathcal{N}_b}\mathbf{F}_{b}[i,i] + BP_{\text{sta}} + |\mathcal{K}|P_{\text{UE}}
\end{aligned}
\end{equation}
where the first term is the PAs' power consumption to get the desired output powers assuming PA efficiency $\eta\in[0,1]$. The second term is the power consumption of the RF chains, i.e., \textcolor{black}{an amount of  $P_{\text{RF}}$ is consumed if the $i$th antenna of BS $b$ is selected and there is no power consumption otherwise}. $P_{\text{sta}}$
is the static power spent by cooling systems, power supplies, etc, and $P_{\text{UE}}$ is the power consumption of each user terminal. For the ease of notation, we denote $P_0\triangleq BP_{\text{sta}} + |\mathcal{K}|P_{\text{UE}}$.


\vspace{-2mm}
\subsection{Energy Efficiency Maximization}
\textcolor{black}{The appearance of the antenna selection matrices $\mathbf{F}_{b_g}$ in \eqref{eq:SINRexpression} and \eqref{eq:powermodel} make it challenging to proceed further, mostly due to the multiplication $\mathbf{F}_{b_g}\mathbf{w}_{g}$ in both equations. Thus, to find a more tractable formulation, we first remove $\mathbf{F}_{b_g}$ from the expressions and rewrite \eqref{eq:SINRexpression} and \eqref{eq:powermodel}, as
\begin{equation}\label{eq:SINRexpression2}
\Gamma_k(\mathbf{w}) = \frac{|\mathbf{h}_{b_g,k}\herm \mathbf{w}_{g}|^{2}}{{N_0 + \sum\limits_{u \in \mathcal{G} \setminus \{g\}} |\mathbf{h}_{b_u,k}\herm \mathbf{w}_{u}|^{2}}}
\end{equation}
\begin{equation}
\label{eq:powermodel2}
\begin{aligned}
P_{\text{tot}}= & \dfrac{1}{\eta}\sum\limits_{g\in\mathcal{G}}||\mathbf{w}_{g}||_2^{2} +P_{\text{RF}}\sum\limits_{b\in\mathcal{B}}\sum\limits_{i \in \mathcal{N}_b}a_{b,i} + BP_{\text{sta}} + |\mathcal{K}|P_{\text{UE}},
\end{aligned}
\end{equation}
where $a_{b,i}\in\{0,1\}$ is the binary antenna selection variable for the $i$th transmit antenna of BS $b$, i.e., $a_{b,i}=1$, if the $i$th antenna is selected, and $a_{b,i}=0$ otherwise. In the antenna selection, we need to ensure that all beamforming coefficients associated with antenna $i$ of BS $b$ should be simultaneously set to zero to switch off the antenna. This connection of the antenna selection variables to the beamforming coefficients is achieved via the constraint $||\hat{\mathbf{w}}_{b,i}||_2^2 \leq a_{b,i}P_{\text{max}}$, where $\hat{\mathbf{w}}_{b,i} \triangleq [\mathbf{w}_{\mathcal{G}_b(1)}[i], \mathbf{w}_{\mathcal{G}_b(2)}[i],\ldots,\mathbf{w}_{\mathcal{G}_b(G_b)}[i]]\trans$ is an expression including the beamforming coefficients related to antenna $i$ of BS $b$. That is, if we set $a_{b,i}=0$, then $||\hat{\mathbf{w}}_{b,i}||_2^2=0$, meaning that in the SINR expression \eqref{eq:SINRexpression2}, $\mathbf{w}_{g}[i]=0, \forall g \in \mathcal{G}_b$. On the other hand, if $a_{b,i}=1$, then this antenna is restricted to have at most the maximum transmit power $P_{\text{max}}$.}

\textcolor{black}{In a multicasting system, the information has to be decodable by all users in a group, which means that the rate for user group $g$ is defined as a minimum of the user rates across the whole group. Thus, we can write the achievable sum rate expression as $R(\mathbf{w})\triangleq \sum\limits_{g\in\mathcal{G}}\underset{k\in\mathcal{K}_g}{\min}\log(1+\Gamma_k(\mathbf{w}))$.}
Under these notations, the network EE maximization problem can be written as
{\begin{subequations} \label{EEmax}
\begin{eqnarray}{} \underset{\mathbf{w},\mathbf{a}}{\maxi} &  & \frac{R(\mathbf{w})}{ g(\mathbf{w},\mathbf{a}) + P_{0}}  \label{eq:EE:obj} \\ \st
 &   & ||\hat{\mathbf{w}}_{b,i}||_2^2 \leq a_{b,i}P_{\text{max}},  \forall b \in \mathcal{B}, i \in \mathcal{N}_b\label{eq:IntPconstraint}\\
%
%
 &   & \min_{k\in\mathcal{K}_g}\log(1+\Gamma_k(\mathbf{w})) \geq \max_{k\in\mathcal{K}_g}{\bar{R}}_k,  \forall g \in \mathcal{G}, \label{eq:SINRconstraints}\\
 &   & a_{b,i} \in \{0,1\}, \forall b \in \mathcal{B}, i \in \mathcal{N}_b \label{eq:binary}
\end{eqnarray}
\end{subequations}}
where $g(\mathbf{w},\mathbf{a})\triangleq \sum\limits_{g\in\mathcal{G}}\frac{1}{\eta}||\mathbf{w}_{g}||_2^{2} + P_{\text{RF}}\sum\limits_{b\in\mathcal{B}}\sum\limits_{i\in\mathcal{N}_b}a_{b,i}$ is a function denoting the adjustable power consumption, $\bar{R}_k$ is the minimum rate requirement for user $k$, and $\mathbf{a}\triangleq\{a_{b,i}\}_{b\in\mathcal{B},i\in\mathcal{N}_b}$. Note that for the physical layer multicasting the rate of a certain group is defined by the worst case user. Thus, constraint \eqref{eq:SINRconstraints} is to guarantee that the achieved multicasting rate is larger than the largest QoS requirement in the group, because all the requirements have to be satisfied.
The above problem is a non-convex mixed-Boolean fractional program which is hard to tackle as such. One of the main challenges is that the problem is non-convex even when the Boolean variables are relaxed to be continuous. More specifically, in that case, \eqref{eq:SINRconstraints} and the numerator of the objective function are non-convex.

\vspace{-1mm}
\section{Mixed-Boolean Programming Based Method}\label{sec:CentralizedMethods}

\subsection{Equivalent Transformation}
Here we aim at developing a continuous relaxation based algorithm which yields close to a Boolean solution.
To this end, a tight continuous relaxation plays an important role.
As a first step towards a more efficient reformulation of \eqref{EEmax}, we equivalently replace the maximum power constraints in \eqref{eq:IntPconstraint} with the following two constraints
\begin{subequations}
\label{perspective}
\begin{eqnarray}
&  & \hspace{-10pt} ||\hat{\mathbf{w}}_{b,i}||_2^2 \leq a_{b,i}^\chi v_{b,i},\; \forall b \in \mathcal{B}, i \in \mathcal{N}_b \label{eq:EEmax:reform0:MaxPC}\\
&  & \hspace{-10pt} v_{b,i} \leq P_{\text{max}}, \forall b \in \mathcal{B}, i \in \mathcal{N}_b. \label{eq:EEmax:reform0:vmin}
\end{eqnarray}
\end{subequations}
\textcolor{black}{where the variable  $v_{b,i}$ can be viewed as a soft output power level of antenna $i$ of BS $b$ (i.e., the optimized power when the Boolean variables $a_{b,i}$ are relaxed to continuous), and we have introduced the exponent $\chi\geq 1$ in \eqref{eq:EEmax:reform0:MaxPC} for the sake of a tighter continuous relaxation presented  in details shortly. The equivalence between  \eqref{eq:IntPconstraint} and  \eqref{perspective} is guaranteed as $a_{b,i}$ is Boolean, i.e., $ a_{b,i}^\chi = a_{b,i} $  for any  $\chi>0$.}
\textcolor{black}{Thus, we desire to solve the following equivalent transformation of \eqref{EEmax}
\begin{subequations} \label{EEmaxEqBinary}
\begin{eqnarray}{} \underset{\mathbf{w},\mathbf{a}}{\maxi} &  & \frac{R(\mathbf{w})}{ g(\mathbf{v},\mathbf{a}) + P_{0}}  \label{eq:EE:obj} \\ \st
&    & \eqref{eq:SINRconstraints}, \eqref{eq:binary}, \eqref{eq:EEmax:reform0:MaxPC}, \eqref{eq:EEmax:reform0:vmin}
\end{eqnarray}
\end{subequations}
where $\mathbf{v}\triangleq\{v_{b,i}\}_{b\in\mathcal{B},i\in\mathcal{N}_b}$ and $g(\mathbf{v},\mathbf{a})\triangleq \sum\limits_{b\in\mathcal{B}}\sum\limits_{i\in\mathcal{N}_b}\frac{1}{\eta}v_{b,i} + P_{\text{RF}}\sum\limits_{b\in\mathcal{B}}\sum\limits_{i\in\mathcal{N}_b}a_{b,i}$.}
We  remark that it is natural to write $g(\mathbf{v},\mathbf{a})= \sum\limits_{b\in\mathcal{B}}\sum\limits_{i\in\mathcal{N}_b}\frac{1}{\eta}v_{b,i}a_{b,i} + P_{\text{RF}}\sum\limits_{b\in\mathcal{B}}\sum\limits_{i\in\mathcal{N}_b}a_{b,i}$. To achieve a more tractable formulation, we define $g(\mathbf{v},\mathbf{a})$ as done in \eqref{EEmaxEqBinary}, i.e., $a_{b,i}$ is excluded from the first term. However, \eqref{EEmax} and \eqref{EEmaxEqBinary} are still equivalent in the sense that they achieve the same optimal solutions, which can be proved as follows. Firstly we note that $a_{b,i}$ is binary in both problems. Secondly, (8a) has to be satisfied with equality at the optimality. Otherwise we can strictly decrease  $v_{b,i}$ without violating (8a) but then achieve a larger objective value for \eqref{EEmaxEqBinary}. Now it is clear that if $a_{b,i}=0$, then both $v_{b,i}=0$ and $||\hat{\mathbf{w}}_{b,i}||_2^2=0$. Also if $a_{b,i}=1$, then $v_{b,i}=||\hat{\mathbf{w}}_{b,i}||_2^2$ as $a_{b,i}=a_{b,i}^{\chi}=1$.

The motivation for introducing the exponent $\chi\geq 1$ in \eqref{eq:EEmax:reform0:MaxPC} is explained as follows. First, we note that when  $\chi= 1$, \eqref{eq:EEmax:reform0:MaxPC} is called the perspective formulation \cite{Gunluk-10,MINLP}, and both \eqref{eq:EEmax:reform0:MaxPC}  and \eqref{eq:EEmax:reform0:vmin} are convex. Thus, the perspective formulation is routinely used to  find  optimal solutions for mixed-Boolean programs with convex continuous relaxations e.g. in \cite{Cheng-13}.  However, this is not  case for the continuous relaxation of the considered problem in \eqref{EEmax} due to \eqref{eq:SINRconstraints} and the numerator of the objective function. As later on we adopt the SCA to find a suboptimal solution to the continuous relaxation, a tight continuous relaxation of \eqref{EEmax} is critically important as it increases the chance of obtaining a high-quality solution  for the
original mixed-Boolean fractional program. Although this cannot be analytically proved, it is intuitively explained as follows. \textcolor{black}{The role of exponent $\chi\geq 1$ in \eqref{eq:EEmax:reform0:MaxPC} is to act as a penalty parameter which penalizes the values of $a_{b,i}$ so that they are encouraged towards a Boolean solution when considering the continuous relaxation. More explicitly, the larger $\chi$, the tighter is the continuous relaxation. }\textcolor{black}{Mathematically, we have the following.
\begin{lm}\label{Lemma1} Let $\text{EE}_{\text{bool}}$, $\text{EE}_{\text{cont,}\chi=m}$, and $\text{EE}_{\text{cont,orig}}$ refer to the optimal objective of the Boolean formulation \eqref{EEmax}, continuous relaxation of \eqref{EEmaxEqBinary} with $\chi=m$, and continuous relaxation of \eqref{EEmax}. Then the following inequality holds
\begin{equation}
\text{EE}_{\text{bool}}  \overset{(iii)}{\leq}  \text{EE}_{\text{cont,$\chi=m$}} \overset{(ii)}{\leq} \text{EE}_{\text{cont,$\chi=1$}} \overset{(i)}{\leq} \text{EE}_{\text{cont,orig}}
\end{equation}
\end{lm}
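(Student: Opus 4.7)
The plan is to verify the three inequalities $(i)$, $(ii)$, $(iii)$ separately by comparing the respective feasible sets together with their objective values. All three problems share the same numerator $R(\mathbf{w})$, the same rate constraints \eqref{eq:SINRconstraints}, and the same set constraint on $\mathbf{a}$ (Boolean for \eqref{EEmax}, continuous on $[0,1]$ for the relaxations), so it suffices to track the antenna-power coupling constraint and the denominator $g(\cdot)$.

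For $(iii)$, I would take any Boolean-feasible $(\mathbf{w},\mathbf{a})$ of \eqref{EEmax} and set $v_{b,i}\triangleq||\hat{\mathbf{w}}_{b,i}||_2^2$; because $a_{b,i}^m=a_{b,i}$ on $\{0,1\}$, constraint \eqref{eq:EEmax:reform0:MaxPC} is satisfied with equality, while \eqref{eq:IntPconstraint} yields \eqref{eq:EEmax:reform0:vmin}. Since $\sum_{b,i}v_{b,i}=\sum_g||\mathbf{w}_g||_2^2$ under this choice, the two denominators coincide, proving $(iii)$. Inequality $(ii)$ is then immediate: for $m\geq 1$ and $a_{b,i}\in[0,1]$ one has $a_{b,i}^m\leq a_{b,i}$, so any point feasible for the $\chi=m$ relaxation automatically satisfies \eqref{eq:EEmax:reform0:MaxPC} with $\chi=1$, while the two relaxations share the same objective.

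The substantive step is $(i)$. Starting from a feasible $(\mathbf{w},\mathbf{a},\mathbf{v})$ of the $\chi=1$ relaxation, I would drop $\mathbf{v}$ and observe that $||\hat{\mathbf{w}}_{b,i}||_2^2\leq a_{b,i}v_{b,i}\leq a_{b,i}P_{\text{max}}$ by \eqref{eq:EEmax:reform0:vmin}, so $(\mathbf{w},\mathbf{a})$ is feasible for the continuous relaxation of \eqref{EEmax}. The key denominator estimate is the componentwise inequality
\begin{equation*}
v_{b,i} \;\geq\; ||\hat{\mathbf{w}}_{b,i}||_2^2, \quad \forall b\in\mathcal{B},\, i\in\mathcal{N}_b,
\end{equation*}
which when $a_{b,i}>0$ follows from \eqref{eq:EEmax:reform0:MaxPC} together with $a_{b,i}\leq 1$, and when $a_{b,i}=0$ holds because the same constraint forces $||\hat{\mathbf{w}}_{b,i}||_2^2=0$. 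Summing over $(b,i)$ yields $\sum_{b,i}v_{b,i}\geq\sum_g||\mathbf{w}_g||_2^2$, and since the $P_{\text{RF}}$ and $P_0$ contributions agree between the two formulations, the denominator in the $\chi=1$ relaxation is no smaller than that in the original relaxation, yielding $(i)$. The only technical nuisance I anticipate is precisely the degenerate index set $\{(b,i):a_{b,i}=0\}$, but the antenna-power coupling constraint itself eliminates it cleanly.
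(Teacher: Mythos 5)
Your proof is correct, and it rests on the same two elementary inequalities that drive the paper's argument in Appendix~A: $a_{b,i}^{m}\leq a_{b,i}$ for $a_{b,i}\in[0,1]$, $m\geq1$, and $v_{b,i}\geq \|\hat{\mathbf{w}}_{b,i}\|_2^2/a_{b,i}\geq\|\hat{\mathbf{w}}_{b,i}\|_2^2$. What differs is the organization, and yours is the tighter of the two. The paper reasons at the optimum: for (i) it takes the optimizer of the original relaxation, \emph{assumes} it also optimizes the $\chi=1$ relaxation, and then derives a contradiction about the value of $v_{b,i}$; for (iii) it simply asserts that any continuous relaxation upper-bounds the Boolean value, without addressing the fact that the $\chi=m$ relaxation is a relaxation of the \emph{transformed} problem \eqref{EEmaxEqBinary}, whose denominator uses $v_{b,i}$ rather than $\|\hat{\mathbf{w}}_{b,i}\|_2^2$. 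You instead chain three feasible-point embeddings (Boolean point $\mapsto$ $\chi=m$ feasible point with equal objective; $\chi=m$ feasible point $\mapsto$ $\chi=1$ feasible point with equal objective; $\chi=1$ feasible point $\mapsto$ feasible point of the relaxed \eqref{EEmax} with no smaller objective), which avoids any assumption about where the optima sit and makes (iii) a genuine argument rather than an assertion. The one implicit assumption you share with the paper is $v_{b,i}\geq0$ (needed both for $a^{m}v\leq av$ in (ii) and for the degenerate case $a_{b,i}=0$ in (i)); this is not written as a constraint in \eqref{perspective} but is clearly intended, since $v_{b,i}$ is defined as a soft output power level, so it is not a gap relative to the paper.
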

\begin{proof}
See Appendix \ref{App1}.
\end{proof}}
The above lemma states that the optimal objective of the proposed continuous relaxation becomes closer to that of the original mixed-Boolean program as $\chi$ increases. Thus, it is reasonable to expect that solving the continuous relaxation with a proper choice of $\chi$  and rounding the obtained solution may provide a good solution for the original problem. This is numerically verified in Section \ref{sec:NumericalResults}.
\vspace{-2mm}
\textcolor{black}{\subsection{Proposed Method to Solve \eqref{EEmaxEqBinary}}
We propose an algorithm which aims to find a good solution to \eqref{EEmax} (or, equivalently \eqref{EEmaxEqBinary}). The algorithm consists of two phases:} 1) \textcolor{black}{solving continuous relaxation of \eqref{EEmaxEqBinary} and} 2) \textcolor{black}{recovering the Boolean solution from the continuous relaxation.}

\subsubsection{Solving Continuous Relaxation of \eqref{EEmaxEqBinary}}\label{SolveContRelax}

\textcolor{black}{The problem of interest can be written as
\begin{subequations} \label{EEmax2}
\begin{eqnarray}{} \underset{\mathbf{w},\mathbf{v},\mathbf{a}}{\maxi} &  & \frac{\sum_{g\in\mathcal{G}}\min_{k\in\mathcal{K}_g}\log(1+\Gamma_k(\mathbf{w}))}{ g(\mathbf{v},\mathbf{a}) + P_{0}}  \label{eq:EE4:obj} \\ \st
%
%
 &   & \min_{k\in\mathcal{K}_g}\log(1+\Gamma_k(\mathbf{w})) \geq \max_{k\in\mathcal{K}_g}{\bar{R}}_k,  \forall g \in \mathcal{G}, \label{eq:SINRconstraints2}\\
 &   & 0 \leq a_{b,i} \leq 1, \forall b\in\mathcal{B}, i\in\mathcal{N}_b \label{eq:EEmax:reform1:relaxed} \\
 &   & \eqref{eq:EEmax:reform0:MaxPC}, \eqref{eq:EEmax:reform0:vmin}
\end{eqnarray}
\end{subequations}
Then, we replace $\min_{k\in\mathcal{K}_g}\log(1+\Gamma_k(\mathbf{w}))$ with a new variable $r_g=\min_{k\in\mathcal{K}_g}\log(1+\Gamma_k(\mathbf{w}))$ and rewrite the above problem equivalently as
\begin{subequations} \label{EEmax3}
\begin{eqnarray}{} \underset{\mathbf{w},\mathbf{v},\mathbf{a},\mathbf{r}}{\maxi} &  & \frac{\sum_{g\in\mathcal{G}}r_g}{ g(\mathbf{v},\mathbf{a}) + P_{0}}  \label{eq:EE5:obj} \\ \st
%
%
  &   & r_g = \min_{k\in\mathcal{K}_g}\log(1+\Gamma_k(\mathbf{w})),  \forall g \in \mathcal{G}, \label{eq:equalityconst}\\
 &   & r_g \geq \max_{k\in\mathcal{K}_g}{\bar{R}}_k,  \forall g \in \mathcal{G}, \label{eq:EEmax:reform1:minSINR}\\
  &   & \eqref{eq:EEmax:reform0:MaxPC}, \eqref{eq:EEmax:reform0:vmin}, \eqref{eq:EEmax:reform1:relaxed}
\end{eqnarray}
\end{subequations}
where $\mathbf{r}\triangleq \{r_g\}_{g\in\mathcal{G}}$.
In the above, \eqref{eq:equalityconst} can be further replaced by the inequality $r_g \leq \min_{k\in\mathcal{K}_g}\log(1+\Gamma_k(\mathbf{w}))$, which is then equivalent to $r_g \leq \log(1+\Gamma_k(\mathbf{w})), \forall k \in \mathcal{K}_g$. To address the nonconvex rate function, we introduce new variables $\boldsymbol\gamma\triangleq\{\gamma_k\}_{k\in\mathcal{K}}$ to represent the SINR of each user $k$ \cite{Oskari:2017:TSP}, and write \eqref{EEmax3} equivalently as
\begin{subequations}
\label{eq:EEmax:reform1}
\begin{eqnarray}
\hspace{-10pt} \underset{\mathbf{w},\boldsymbol\gamma, \mathbf{v},\mathbf{a},\mathbf{r}}{\maxi} &  &  \frac{\sum_{g\in\mathcal{G}}r_g}{ g(\mathbf{v},\mathbf{a}) + P_{0}}\label{EqObj}\\
\st
&  & \hspace{-10pt} \gamma_k \leq \frac{|\mathbf{h}_{b_g,k}\herm \mathbf{w}_{g}|^{2}}{{N_0 + \sum\limits_{u \in \mathcal{G} \setminus \{g\}} |\mathbf{h}_{b_u,k}\herm \mathbf{w}_{u}|^{2}}}, \forall k \in \mathcal{K}\label{eq:EEmax:reform1:rate}\\
&  & \hspace{-10pt} r_g \leq \log(1+\gamma_k), \forall g \in \mathcal{G}, k \in \mathcal{K}_g\label{eq:EEmax:reform1:weakestRATE}\\
&  & \hspace{-10pt} \eqref{eq:EEmax:reform1:relaxed}, \eqref{eq:EEmax:reform1:minSINR}, \eqref{eq:EEmax:reform0:MaxPC}, \eqref{eq:EEmax:reform0:vmin}
\end{eqnarray}
\end{subequations}  }
\textcolor{black}{\begin{lm}\label{Lemma2}
\label{lem:equivalence}Problems \eqref{EEmax2} and \eqref{eq:EEmax:reform1}
are equivalent at the optimality.
\end{lm}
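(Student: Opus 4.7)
The plan is to establish a two-way correspondence between feasible solutions of the two problems that preserves objective values, and then argue that at any optimizer of \eqref{eq:EEmax:reform1} the newly introduced auxiliary variables $\boldsymbol\gamma$ and $\mathbf{r}$ must attain the ``tight'' values that collapse \eqref{eq:EEmax:reform1} back onto \eqref{EEmax2}. Since the only differences between the two problems are (i) the scalar variables $\gamma_k$ replacing the actual SINRs via \eqref{eq:EEmax:reform1:rate}, (ii) the scalar variables $r_g$ replacing the group-worst-user log-rate via \eqref{eq:EEmax:reform1:weakestRATE}, and (iii) the per-group minimum rate condition expressed through $r_g$ in \eqref{eq:EEmax:reform1:minSINR}, the rest of the feasible set is identical. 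So everything reduces to showing that these inequalities are active at optimality.

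First, I would do the easy direction: given any feasible $(\mathbf{w},\mathbf{v},\mathbf{a})$ for \eqref{EEmax2}, set $\gamma_k\triangleq \Gamma_k(\mathbf{w})$ for all $k\in\mathcal{K}$ and $r_g\triangleq\min_{k\in\mathcal{K}_g}\log(1+\gamma_k)$ for all $g\in\mathcal{G}$. Then \eqref{eq:EEmax:reform1:rate} and \eqref{eq:EEmax:reform1:weakestRATE} hold with equality, constraint \eqref{eq:EEmax:reform1:minSINR} reduces precisely to \eqref{eq:SINRconstraints2}, and the objective value of \eqref{eq:EEmax:reform1} matches that of \eqref{EEmax2}. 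Consequently, the optimal value of \eqref{eq:EEmax:reform1} is no smaller than that of \eqref{EEmax2}.

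For the reverse inequality, let $(\mathbf{w}^{\star},\boldsymbol\gamma^{\star},\mathbf{v}^{\star},\mathbf{a}^{\star},\mathbf{r}^{\star})$ be any optimizer of \eqref{eq:EEmax:reform1}. The objective is strictly increasing in each $r_g$ while $g(\mathbf{v},\mathbf{a})+P_0$ is independent of $\mathbf{r}$ and $\boldsymbol\gamma$, so the constraints \eqref{eq:EEmax:reform1:weakestRATE} must be tight in the sense that $r_g^{\star}=\min_{k\in\mathcal{K}_g}\log(1+\gamma_k^{\star})$; otherwise one could strictly increase some $r_g^{\star}$ without violating any other constraint, contradicting optimality. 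Likewise, the right-hand side of \eqref{eq:EEmax:reform1:weakestRATE} is increasing in each $\gamma_k$ and no other constraint upper-bounds $\gamma_k$ except \eqref{eq:EEmax:reform1:rate}, so \eqref{eq:EEmax:reform1:rate} must also be tight, giving $\gamma_k^{\star}=\Gamma_k(\mathbf{w}^{\star})$. Substituting these identities back, the triple $(\mathbf{w}^{\star},\mathbf{v}^{\star},\mathbf{a}^{\star})$ is feasible for \eqref{EEmax2} with objective value equal to that of \eqref{eq:EEmax:reform1}, yielding the reverse inequality.

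The main obstacle, and the only non-cosmetic step, is the tightness argument in the previous paragraph: one has to justify that increasing $\gamma_k^{\star}$ or $r_g^{\star}$ does not create any hidden infeasibility. This is clean here because $\gamma_k$ appears only on the left of \eqref{eq:EEmax:reform1:rate} and on the right of \eqref{eq:EEmax:reform1:weakestRATE}, while $r_g$ appears only on the left of \eqref{eq:EEmax:reform1:weakestRATE} and \eqref{eq:EEmax:reform1:minSINR}, so monotonicity arguments suffice and no structural subtleties arise. Combining the two directions yields equal optimal values and a bijective correspondence between optimizers (modulo the auxiliary variables), which is exactly the equivalence claimed in the lemma.
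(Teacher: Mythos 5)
Your proof is correct, but it takes a genuinely different route from the paper's. You argue by a two-way correspondence: feasible points of \eqref{EEmax2} lift to feasible points of \eqref{eq:EEmax:reform1} with equal objective by setting $\gamma_k=\Gamma_k(\mathbf{w})$ and $r_g=\min_{k\in\mathcal{K}_g}\log(1+\gamma_k)$, and conversely any optimizer of \eqref{eq:EEmax:reform1} projects to a feasible point of \eqref{EEmax2} with no smaller objective, with tightness of the auxiliary constraints following from pure monotonicity of the objective in $r_g$ and of the constraints in $\gamma_k$. The paper instead proves activeness of \eqref{eq:EEmax:reform1:rate} and \eqref{eq:EEmax:reform1:weakestRATE} by a power-scaling contradiction: if the SINR inequality were slack for the worst user of some group $g$, one could shrink $\mathbf{w}_g$, which reduces interference to every other group (allowing their $\gamma_j$ and hence $r_l$ to increase) and simultaneously frees up $v_{b,i}$ or $a_{b,i}$ to shrink the denominator, strictly improving the EE. Your argument is more elementary and self-contained -- it never touches the interference structure or the power term -- and it suffices for the claimed equivalence of optimal values. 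The paper's argument buys something extra: it shows the SINR constraints are genuinely \emph{forced} to be active at the optimum (not merely that they can be taken active without loss), which is the property that matters for the quality of the subsequent SCA linearizations. One small imprecision on your side: your claim that \eqref{eq:EEmax:reform1:rate} ``must be tight'' for every $k$ does not follow from optimality alone, since raising $\gamma_k$ for a non-worst user of a group changes nothing; what you actually need, and what your monotonicity argument does deliver after a one-line patch (if $r_g^{\star}<\min_{k}\log(1+\Gamma_k(\mathbf{w}^{\star}))$, raise all $\gamma_k$ to $\Gamma_k(\mathbf{w}^{\star})$ and then raise $r_g$, contradicting optimality), is the identity $r_g^{\star}=\min_{k\in\mathcal{K}_g}\log(1+\Gamma_k(\mathbf{w}^{\star}))$, from which the reverse inequality and the verification of \eqref{eq:SINRconstraints2} follow exactly as you state.
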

\begin{proof}
See Appendix \ref{App2}.
\end{proof}}

\vspace{-2mm}
By looking at the formulation \eqref{eq:EEmax:reform1}, it is discovered that the objective function is a concave-convex fractional function and the main challenge in solving \eqref{eq:EEmax:reform1} is in the constraints \eqref{eq:EEmax:reform1:rate} and \eqref{eq:EEmax:reform0:MaxPC}.
To handle these, we use the same idea as that in \cite{Oskari:2017:TSP,Tervo-15,Venkatraman-16} to replace \eqref{eq:EEmax:reform1:rate} equivalently as
\begin{subequations}
\label{eq:rep:reform2}
\begin{eqnarray}
&  & \hspace{-15pt} \gamma_k \leq \frac{|\mathbf{h}_{b_g,k}\herm \mathbf{w}_{g}|^{2}}{\beta_k}, \forall k \in \mathcal{K} \label{eq:rep:reform2:quadoverlin}\\
&  & \hspace{-15pt} \beta_k \geq {N_0 + \sum\limits_{u \in \mathcal{G} \setminus \{g\}} |\mathbf{h}_{b_u,k}\herm \mathbf{w}_{u}|^{2}}, \forall k \in \mathcal{K}\label{eq:rep:reform2:betaconst}
\end{eqnarray}
\end{subequations}
where $\boldsymbol\beta \triangleq \{\beta_k\}_{k\in\mathcal{K}}$ are new variables representing the total interference-plus-noise of user $k$.
Now \eqref{eq:rep:reform2:betaconst} is readily a convex constraint, while \eqref{eq:rep:reform2:quadoverlin} involves a convex function at both sides. Specifically, the left and right sides of \eqref{eq:rep:reform2:quadoverlin} are linear and quadratic-over-linear functions, respectively.  To formulate \eqref{eq:EEmax:reform0:MaxPC} in a more tractable manner, we first write the following equivalent form
\begin{equation}\label{quadlinpers}
\frac{||\hat{\mathbf{w}}_{b,i}||_2^2}{v_{b,i}} \leq a_{b,i}^\chi ,\; \forall b \in \mathcal{B}, i \in \mathcal{N}_b.
\end{equation}
In \eqref{quadlinpers}, \textcolor{black}{the left side is a convex quadratic-over-linear function, and the right side is also convex \cite{Boyd:ConvexOpt:2004}}. At this point, we can equivalently write \eqref{eq:EEmax:reform1} as
\begin{subequations}
\label{eq:EEmax:reform2}
\begin{eqnarray}
\hspace{-15pt} \underset{\mathbf{w},\boldsymbol\gamma, \mathbf{v},\mathbf{a},\boldsymbol\beta,\mathbf{r}}{\maxi} &  &  \frac{\sum_{g\in\mathcal{G}}r_g}{ g(\mathbf{v},\mathbf{a}) + P_{0}}\\
\st
&  &  \hspace{-15pt} \frac{||\hat{\mathbf{w}}_{b,i}||_2^2}{v_{b,i}} \leq a_{b,i}^\chi ,\; \forall b \in \mathcal{B}, i \in \mathcal{N}_b \label{eq:EEmax:reform2:MaxPC}\\
&  & \hspace{-15pt} \gamma_k \leq \frac{|\mathbf{h}_{b_g,k}\herm \mathbf{w}_{g}|^{2}}{\beta_k}, \forall k \in \mathcal{K} \label{eq:EEmax:reform2:quadoverlin}\\
&  & \hspace{-15pt} 0 \leq a_{b,i} \leq 1, \forall b \in \mathcal{B}, i \in \mathcal{N}_b\\
&  & \hspace{-15pt} \eqref{eq:EEmax:reform0:vmin}, \eqref{eq:EEmax:reform1:minSINR}, \eqref{eq:EEmax:reform1:weakestRATE}, \eqref{eq:rep:reform2:betaconst}.
\end{eqnarray}
\end{subequations}
Now we can see that in \eqref{eq:EEmax:reform2}, all the other constraints are convex except \eqref{eq:EEmax:reform2:MaxPC}, and \eqref{eq:EEmax:reform2:quadoverlin}, which can be expressed as a difference of convex functions. We propose to use successive convex approximation to approximate \eqref{eq:EEmax:reform2} as a convex problem in each iteration. Specifically, \textcolor{black}{at some iteration $n$ of the SCA, the nonconvex parts of \eqref{eq:EEmax:reform2:quadoverlin} and \eqref{eq:EEmax:reform2:MaxPC} are approximated by convex ones at some operating point with the aid of the first-order Taylor approximations.}
To deal with the right side of \eqref{eq:EEmax:reform2:quadoverlin}, we can write its linear \textcolor{black}{first-order Taylor lower bound approximation} at point $(\mathbf{w}_{g}^{(n)},\beta_{k}^{(n)})$ as
\begin{eqnarray}\label{eq:EEmax:quad_over_lin_app}
 |\mathbf{h}_{b_g,k}\herm\mathbf{w}_{g}|^2/\beta_{k} \geq 2\Re((\mathbf{w}_{g}^{(n)})\herm\mathbf{h}_{b_g,k}\mathbf{h}_{b_g,k}\herm\mathbf{w}_{g})/\beta_{k}^{(n)}\nonumber \\
 - (|\mathbf{h}_{b_g,k}\herm\mathbf{w}_{g}^{(n)}|/\beta_{k}^{(n)})^2\beta_{k} \triangleq \Psi_k^{(n)}(\mathbf{w}_{g},\beta_{k}).
\end{eqnarray}
 For \eqref{eq:EEmax:reform2:MaxPC}, we can write the linear lower bound approximation of the right side at point $a_{b,i}^{(n)}$ as
\begin{eqnarray}\label{eq:EEmax:Approx}
 a_{b,i}^\chi \geq (1-\chi)(a_{b,i}^{(n)})^\chi + \chi {(a_{b,i}^{(n)})}^{(\chi-1)}a_{b,i} \triangleq \Upsilon_{b,i}^{(n)}(a_{b,i}).
\end{eqnarray}
With the approximations \eqref{eq:EEmax:quad_over_lin_app} and \eqref{eq:EEmax:Approx} we can write the concave-convex fractional problem {\color{black}at iteration $n+1$} \textcolor{black}{of the SCA} as
\begin{subequations}
\label{eq:EEmax:reform3}
\begin{eqnarray}
\hspace{-20pt} \underset{\mathbf{w},\boldsymbol\gamma,\mathbf{v},\mathbf{a},\boldsymbol\beta,\mathbf{r}}{\maxi} &  &  \frac{\sum_{g\in\mathcal{G}}r_g}{ g(\mathbf{v},\mathbf{a}) + P_{0}}\label{eq:EEmax:reform3:obj}\\
\st
&  & \frac{||\hat{\mathbf{w}}_{b,i}||_2^2}{v_{b,i}} \leq \Upsilon_{b,i}^{(n)}(a_{b,i}), \forall b \in \mathcal{B}, i \in \mathcal{N}_b \label{eq:EEmax:reform3:ApproxB}\\
&  & \gamma_k \leq \Psi_k^{(n)}(\mathbf{w}_{g},\beta_{k}), \forall k \in \mathcal{K}\label{eq:EEmax:reform3:approxquad}\\
&  & 0 \leq a_{b,i} \leq 1, \forall b \in \mathcal{B}, i \in \mathcal{N}_b\label{eq:EEmax:reform3:cont}\\
&  & \eqref{eq:EEmax:reform0:vmin}, \eqref{eq:EEmax:reform1:minSINR},\eqref{eq:rep:reform2:betaconst}, \eqref{eq:EEmax:reform1:weakestRATE}.
\end{eqnarray}
\end{subequations}
\textcolor{black}{Note that although the objective of \eqref{eq:EEmax:reform3} is a linear-fractional function, \eqref{eq:EEmax:reform3} is not classified as a linear-fractional  program as its convex constraints are not linear.}
{\color{black} We also note that \eqref{eq:EEmax:reform3} is not convex but its optimal solution can be found efficiently. This problem is further discussed in the following paragraph.} In the proposed algorithm, the successive convex approximation \cite{Beck:SCA:2010} framework is used, where the concave-convex fractional problem \eqref{eq:EEmax:reform3} is solved at iteration $n+1$. After solving the problem at iteration $n+1$, the optimal solutions $\mathbf{w}_g^*, \beta_k^*, a_{b,i}^*$ are then used to update $\Psi_k^{(n+1)}(\mathbf{w}_{g},\beta_{k})$ and $\Upsilon_{b,i}^{(n+1)}(a_{b,i})$ for the next iteration. The monotonic convergence of the objective function \eqref{eq:EEmax:reform3:obj} is not difficult to see, and a detailed convergence analysis for the problem with similar structure can be found, e.g., in \cite[Appendix A]{Venkatraman-16}. To be self-contained, a convergence proof of the proposed iterative algorithm is provided in Appendix \ref{app:convergence}.

We now present efficient ways to solve \eqref{eq:EEmax:reform3}. As mentioned above, \eqref{eq:EEmax:reform3} is a concave-convex fractional program for which two common methods can be used to find an optimal solution: the Dinkelbach's method or the Charnes-Cooper transformation \cite{Schaible-76}. We simply adopt the latter which transforms \eqref{eq:EEmax:reform3} into the following equivalent convex form: \begin{subequations}
\label{eq:EEmax:reform4}
\begin{eqnarray}
& \hspace{-20pt} \underset{\phi>0,\bar{\mathbf{w}},\bar{\boldsymbol\gamma},\bar{\mathbf{v}},\bar{\mathbf{a}},\bar{\boldsymbol\beta},\bar{\mathbf{r}}}{\maxi} &   \sum_{g\in\mathcal{G}}\bar{r}_g\label{eq:EEmax:reform4:obj}\\
& \hspace{-60pt} \st  & \hspace{-35pt} \sum\limits_{b\in\mathcal{B}}\sum\limits_{i\in\mathcal{N}_b}(\frac{1}{\eta}\bar{v}_{b,i} + P_{\text{RF}}\bar{a}_{b,i}) + \phi P_{0} \leq 1  \\
&  & \hspace{-35pt} \frac{||\bar{\hat{\mathbf{w}}}_{b,i}||_2^2}{\bar{v}_{b,i}} \leq \phi\Upsilon_{b,i}^{(n)}(\frac{\bar{a}_{b,i}}{\phi}), \forall b \in \mathcal{B}, i \in \mathcal{N}_b\\
&  & \hspace{-35pt} \bar{v}_{b,i} \leq \phi P_{\text{max}}, \forall b \in \mathcal{B}, i \in \mathcal{N}_b \\
&  & \hspace{-35pt} 0 \leq \bar{a}_{b,i} \leq \phi, \forall b \in \mathcal{B}, i \in \mathcal{N}_b\\
&  & \hspace{-35pt} \bar{\gamma}_k \leq \phi\Psi_k^{(n)}(\frac{\bar{\mathbf{w}}_{g}}{\phi},\frac{\bar{\beta}_{k}}{\phi}), \forall k \in \mathcal{K} \label{eq:EEmax:reform4:SINRapprox} \\
&  & \hspace{-35pt} \bar{r}_g \geq \phi \max_{k\in\mathcal{K}_g}(\bar{R}_k), \forall g \in \mathcal{G} \\
&  & \hspace{-35pt} \phi\bar{\beta}_k \geq {\phi^2 N_0 + \sum\limits_{u \in \mathcal{G} \setminus \{g\}} |\mathbf{h}_{b_u,k}\herm \bar{\mathbf{w}}_{u}|^{2}}, \forall k \in \mathcal{K}\\
&  & \hspace{-35pt} \bar{r}_g \leq \phi\log(1+\frac{\bar{\gamma}_k}{\phi}), \forall g \in \mathcal{G}, k \in \mathcal{K}_g \label{eq:EEmax:reform4:rateconst}.
\end{eqnarray}
\end{subequations}
From the solution of \eqref{eq:EEmax:reform4}, the optimal solution for the original fractional program \eqref{eq:EEmax:reform3} can be extracted as $\mathbf{w}_g^* = \bar{\mathbf{w}}_g^*/\phi^*, \beta_k^* = \bar{\beta}_k^*/\phi^*,\gamma_k^* = \bar{\gamma}_k^*/\phi^*,a_{b,i}^* = \bar{a}_{b,i}^*/\phi^*,v_{b,i}^* = \bar{v}_{b,i}^*/\phi^*,r_{g}^* = \bar{r}_{g}^*/\phi^*$, where $\bar{\mathbf{w}}_g^*,\phi^*, \bar{\beta}_k^*,\bar{\gamma}_k^*,\bar{a}_{b,i}^*,\bar{v}_{b,i}^*,\bar{r}_{g}^*$ are the optimal variables of \eqref{eq:EEmax:reform4}. The variable $\phi$ represents the inverse of the total power consumption in the problem.

\begin{rem}\label{rem:minAnt}
If at least one of the rate targets $\bar{R}_k$ in some user group of BS $b$ is non-zero, we can further reduce the feasible set of \eqref{eq:EEmax:reform4} by adding the constraints
\begin{equation}
\sum_{i\in\mathcal{N}_b}\bar{a}_{b,i} \geq \phi X_b, \forall b \in \mathcal{B}
\end{equation}
where $X_b$ is the number of groups served by BS $b$ which have at least one user having non-zero rate target. This can be done because it is known that at least $X_b$ antennas have to be active to be able to transmit $X_b$ independent data streams.
\end{rem}

\subsubsection{Recovering the Boolean Solution from Continuous Relaxation}
Generally, solving the continuous relaxation usually results in a solution where many of the antenna selection variables are non-Boolean.
 However, due to the new formulation in \eqref{eq:EEmax:reform2:MaxPC}, many of the continuous antenna selection variables converge \textcolor{black}{either close to zero (i.e., $a_{b,i}<\epsilon$), or close to 1 (i.e., $a_{b,i}>1-\epsilon$), where $\epsilon$ is a small threshold. Accordingly, those small $a_{b,i}$'s can be directly set to 0 and $a_{b,i}$'s close to 1 can be set to 1.} Thus, we propose to switch off all the antennas for which $a_{b,i}<\epsilon$. After performing the antenna selection, the algorithm needs to be run again with the selected antenna set to find the beamformers with lower dimensions. The proposed joint beamforming and antenna selection method is summarized in Algorithm \ref{algo:iterative}.
\vspace{-2mm}
\textcolor{black}{\subsection{Efficient Implementations of Algorithm \ref{algo:iterative}}}

\subsubsection{\textcolor{black}{Simplified Algorithm}}
It is worth observing that the beamformers produced by the relaxed problem are always feasible for the original problem. This means that it is possible to use the antenna set and the beamformers obtained from the relaxed problem for transmission. Thus, we propose a simpler version of the algorithm, where step 7 is completely ignored. In this case, the choice of $\chi$ becomes more important and with larger $\chi$, the simple algorithm yields closer to the original algorithm, i.e., the achieved solution approaches a Boolean one. In the numerical results, it is illustrated that the beamformers returned by the relaxed problem already yields a good energy efficiency with the good choice of $\chi$. This method is called `Alg. \ref{algo:iterative} \emph{`simple'}' in the numerical results. \textcolor{black}{In Appendix \ref{App3}, we  show how this method reduces the worst-case computational complexity.}

\textcolor{black}{\subsubsection{SOCP Approximation}
Note that the proposed method requires solving a generic non-linear convex program \eqref{eq:EEmax:reform4} in each iteration. It is difficult to solve it efficiently, because it involves the exponential cone. In Appendix \ref{App3}, we show a slightly modified algorithm where the problem at each iteration is an SOCP. This greatly reduces the complexity, since it enables the use of state-of-the-art SOCP solvers such as MOSEK, ECOS, or GUROBI. In the numerical results, we will compare the convergence speed of the solution.}
\vspace{-3mm}
\subsection{Initial Points}
Due to the rate constraints, the challenge is to find feasible initial points to run the algorithm presented in the previous section. \textcolor{black}{Note that applying direct convex optimization to find feasible points is not straightforward. One option would be to generate random beamformers until the maximum power constraint and the minimum rate constraints are satisfied. However, this can be very inefficient especially when the rate requirements are high. Another option could be to solve multicell multigroup multicast power minimization problem with minimum SINR constraints (as, e.g. in \cite{Tervo-17MinPower}), which could be equivalently transferred to rate constraint. However, to apply convex optimization in this case, we still need to use semidefinite relaxation, which cannot generally guarantee rank-1 solutions for this problem. Thus, one needs to use Gaussian randomization technique to find feasible rank-1 beamformers. To this end,  we provide herein an initialization method which works effectively for the considered problem.}

\textcolor{black}{The initial $\mathbf{a}^{(0)}$ can be set to all-ones. To find feasible $\mathbf{w}^{(0)},\boldsymbol\beta^{(0)}$, the first observation is that a feasible point of \eqref{eq:EEmax:reform3} is also feasible to \eqref{eq:EEmax:reform4}. That is, we can focus on \eqref{eq:EEmax:reform3} for simplicity to find a feasible initial point. Another observation is that a feasible point for the energy efficiency maximization is also feasible for the sum rate maximization and vice versa. Thus, we can focus on the sum rate maximization problem to simplify the initialization.}
Let us first initialize any $\mathbf{w}^{(0)},\boldsymbol\beta^{(0)}$ and then consider the following problem
\begin{subequations}
\label{eq:EEmax:feas}
\begin{eqnarray}
& \hspace{-30pt} \underset{\mathbf{w},\boldsymbol\gamma,\boldsymbol\beta,\mathbf{q},\mathbf{r},\mathbf{p},\boldsymbol\mu}{\max}  & \sum_{g\in\mathcal{G}}r_g-\lambda(\sum_{k\in\mathcal{K}}(q_{1,k}+q_{2,k})\nonumber \\ & & \hspace{-10pt} + \sum_{g\in\mathcal{G}}\mu_g + \sum_{b\in\mathcal{B}}\sum_{i\in\mathcal{N}_b}p_{b,i}) \\
& \hspace{-50pt} \st
&  \hspace{-20pt} ||\hat{\mathbf{w}}_{b,i}||_2^2 - P_{\text{max}} \leq p_{b,i},\; \forall b \in \mathcal{B}, i \in \mathcal{N}_b \label{eq:EE:PC}\\
&  & \hspace{-20pt} \gamma_k - \Psi_k^{(0)}(\mathbf{w}_{g},\beta_{k}) \leq q_{1,k}, \forall k \in \mathcal{K}\\
&  & \hspace{-20pt} \max_{k\in\mathcal{K}_g}(\bar{R}_k) - r_g \leq \mu_g , \forall g \in \mathcal{G} \\
&  & \hspace{-20pt} r_g \leq \log(1+\gamma_k), \forall g \in \mathcal{G}, k \in \mathcal{K}_g \\
&  & \hspace{-20pt} {N_0 + \sum\limits_{u \in \mathcal{G} \setminus \{g\}} |\mathbf{h}_{b_u,k}\herm \mathbf{w}_{u}|^{2}} - \beta_k \leq q_{2,k}, \forall k \in \mathcal{K}\\
&  & \hspace{-20pt} q_{1,k} \geq 0, q_{2,k} \geq 0, \forall k \in \mathcal{K}\\
&  & \hspace{-20pt} p_{b,i} \geq 0, \forall b \in \mathcal{B}, i \in \mathcal{N}_b\\
&  & \hspace{-20pt} \mu_{g} \geq 0, \forall g \in \mathcal{G},
\end{eqnarray}
\end{subequations}
where $\lambda$ is a positive penalty parameter and $\mathbf{q} \triangleq \{q_{1,k},q_{2,k}\}_{k\in\mathcal{K}}, \mathbf{p} \triangleq \{p_{b,i}\}_{b\in\mathcal{B},i\in\mathcal{N}_b}, \boldsymbol\mu \triangleq \{\mu_{g}\}_{g\in\mathcal{G}}$ are new slack variables. The above problem is iteratively solved by updating $\Psi_k^{(0)}(\mathbf{w}_{g},\beta_{k})$ after each iteration, until $\mathbf{q}=0,\mathbf{p}=0,\boldsymbol\mu=0$. The feasible point is found very efficiently because the penalty terms are encouraged to zero due to the penalty function in the objective \cite{Vu-16}.

\begin{algorithm}[t]
\begin{algorithmic}[1] \caption{Proposed joint beamforming and antenna selection design.}
\label{algo:iterative} \renewcommand{\algorithmicrequire}{\textbf{Initialization:}}
\REQUIRE Set $n=0$, and generate feasible initial points $(\mathbf{w}^{(n)},\boldsymbol\beta^{(n)},\mathbf{a}^{(n)})$.
\STATEx Phase 1:
\REPEAT
\STATE Solve \eqref{eq:EEmax:reform4} with $(\mathbf{w}^{(n)},\boldsymbol\beta^{(n)},\mathbf{a}^{(n)})$
and denote optimal values as $(\bar{\mathbf{w}}^*,\bar{\boldsymbol\beta}^*,\bar{\mathbf{a}}^*)$.
\STATE Update $\mathbf{w}^{(n+1)} = \bar{\mathbf{w}}^*/\phi ,\boldsymbol\beta^{(n+1)} = \bar{\boldsymbol\beta}^*/\phi,\mathbf{a}^{(n+1)}=\bar{\mathbf{a}}^*/\phi$ and $\Upsilon_{b,i}^{(n+1)}(a_{b,i}), \Psi_k^{(n+1)}(\mathbf{w}_{g},\beta_{k})$.\label{algo:update}
\STATE $n:=n+1$.
\UNTIL convergence
\renewcommand{\algorithmicrequire}{\textbf{Output:}} \REQUIRE $a_{b,i}^* = \tfrac{\bar{a}_{b,i}^*}{\phi^*}, \forall b \in \mathcal{B}, i \in \mathcal{N}_b$
\STATEx Phase 2:
\STATE Set $a_{b,i}=0$, for all $b,i$ for which $a_{b,i}^*<\epsilon$.
\STATE Run steps 1 -- 5 again with fixed $\mathbf{a}$ to find beamformers with reduced dimensions.
\renewcommand{\algorithmicrequire}{\textbf{Output:}} \REQUIRE $\mathbf{w}_{g}^* = \tfrac{\bar{\mathbf{w}}_{g}^*}{\phi^*}, \forall g \in \mathcal{G}$
\end{algorithmic}
\end{algorithm} 

Algorithm \ref{algo:iterative} is devised based on the combination of continuous relaxation and SCA, which are both suboptimal methods. Thus its performance in comparison with the optimal one is of a great concern. However we remark that analytical investigation of the obtained suboptimal solution to the considered mixed Boolean nonconvex program is challenging, at least given the nonconvexity of the continuous relaxation. Our rationale is that a tight continuous relaxation coupled with the efficacy of the SCA in dealing with nonconvex programs will offer a good performance. To evaluate the performance of Algorithm \ref{algo:iterative} we use numerical experiments which will be presented in Section \ref{sec:NumericalResults}.

\section{Sparsity-based Approach}
\label{sec:Sparse}
Here we propose an alternative formulation based on directly finding a sparse solution for the beamforming vectors, which does not require the introduction of any Boolean variables. Another efficient widely used technique in the mixed-Boolean programming framework is based on sparsity \cite{Bach-12}. Recall that $\hat{\mathbf{w}}_{b,i}$ contains all the coefficients which are related to antenna $i$ of BS $b$. To switch off this antenna, all the elements of $\hat{\mathbf{w}}_{b,i}$ should be zero simultaneously. On the other hand, we want to optimize the number of antennas for a given optimization target.  Let us collect all the beamformers in a matrix $\mathbf{W}\triangleq [\hat{\mathbf{w}}_{1,1},\ldots,\hat{\mathbf{w}}_{1,N_1},\hat{\mathbf{w}}_{2,1},\ldots,\hat{\mathbf{w}}_{2,N_2},\ldots,\hat{\mathbf{w}}_{B,N_B}] \in \mathbb{C}^{\underset{b}{\max}(|\mathcal{G}_b|)\times \sum_{b\in\mathcal{B}}N_b}$. To switch off an antenna, we should set a corresponding column of $\mathbf{W}$ to zero. In other words, to optimize the number of active antennas, we should optimize the number of non-zero columns in $\mathbf{W}$. To this end, we need a group-sparsity technique which promotes the sparsity of the columns of $\mathbf{W}$, but not the rows.   Let us define  $\hat{\mathbf{W}}\triangleq [||\hat{\mathbf{w}}_{1,1}||_2,||\hat{\mathbf{w}}_{1,2}||_2,\ldots,||\hat{\mathbf{w}}_{1,N_1}||_2,||\hat{\mathbf{w}}_{2,1}||_2,\ldots,||\hat{\mathbf{w}}_{2,N_2}||_2, \\ \ldots,||\hat{\mathbf{w}}_{B,N_B}||_2]\trans$. That is, we have calculated the $\ell_2$-norm of each column to equally weight each row in $\hat{\mathbf{w}}_{b,i}$, and avoid row-sparsity. At this point, we note that optimizing the number of non-zero elements in $\hat{\mathbf{W}}$ can be mathematically expressed as $||\hat{\mathbf{W}}||_0$.
As a result, the energy-efficient joint beamforming and antenna selection problem \eqref{EEmax} can be equivalently formulated as
\begin{subequations} \label{EEmax:sp}
\begin{eqnarray}{} \underset{\mathbf{w}}{\maxi} &  & \frac{R(\mathbf{w})}{\sum\limits_{g\in\mathcal{G}}\frac{1}{\eta}||\mathbf{w}_{g}||_2^{2} + P_{\text{RF}}||\hat{\mathbf{W}}||_0  + P_{0}}  \label{eq:sp:EE:obj} \\ \st
 &   & ||\hat{\mathbf{w}}_{b,i}||_2^2 \leq P_{\text{max}},  \forall b \in \mathcal{B}, i \in \mathcal{N}_b, \eqref{eq:SINRconstraints}\label{eq:sp:IntPconstraint}
\end{eqnarray}
\end{subequations}
Since $||\hat{\mathbf{W}}||_0$ is a discrete function and cannot be optimized as such, some continuous relaxation is required to find a good approximation. Thus, we are interested in solving
\begin{subequations} \label{EEmax:sp:smooth}
\begin{eqnarray}{} \underset{\mathbf{w}}{\maxi} &  & \frac{R(\mathbf{w})}{\sum\limits_{g\in\mathcal{G}}\frac{1}{\eta}||\mathbf{w}_{g}||_2^{2} + (P_{\text{RF}}+\rho)f_i(\mathbf{w})  + P_{0}}  \label{eq:sp:EE:obj} \\ \st
 &   & \eqref{eq:sp:IntPconstraint}
\end{eqnarray}
\end{subequations}
where $f_i(\mathbf{w})$ is some approximation (smoothing function) of $||\hat{\mathbf{W}}||_0$ and $\rho\geq 0$ is an adjustable penalty parameter to control sparsity of $\hat{\mathbf{W}}$. \textcolor{black}{Note that the proposed approach promotes sparsity using the penalty term for the power consumption in the denominator of \eqref{eq:sp:EE:obj}, and not for the whole objective function.}
Next we propose different relaxations $f_i(\mathbf{w})$ to approximate $||\hat{\mathbf{W}}||_0$, a convex one which still maintains the convexity of the denominator in \eqref{eq:sp:EE:obj}, and then two different non-convex smoothing functions requiring additional approximation but yielding better performance.

\subsection{Convex Relaxation of $||\hat{\mathbf{W}}||_0$}

The closest convex approximation of $||\hat{\mathbf{W}}||_0$ is $||\hat{\mathbf{W}}||_1$. \textcolor{black}{However, as such, each element of $\hat{\mathbf{W}}$ can be larger than 1, especially when the power constraint $P_{\text{max}}$ is large. By looking at the denominator of \eqref{eq:sp:EE:obj}, we recall that each element of $\hat{\mathbf{W}}$ should indicate whether the corresponding  antenna is  switched on or off. Thus, let us write the normalized form of $\hat{\mathbf{W}}$ as $\hat{\mathbf{W}}_{\text{norm}}\triangleq [\frac{||\hat{\mathbf{w}}_{1,1}||_2}{\sqrt{P_{\text{max}}}},\frac{||\hat{\mathbf{w}}_{1,2}||_2}{\sqrt{P_{\text{max}}}},\ldots,\frac{||\hat{\mathbf{w}}_{1,N_1}||_2}{\sqrt{P_{\text{max}}}},\frac{||\hat{\mathbf{w}}_{2,1}||_2}{\sqrt{P_{\text{max}}}},\ldots, \\ \frac{||\hat{\mathbf{w}}_{2,N_2}||_2}{\sqrt{P_{\text{max}}}}, \ldots,\frac{||\hat{\mathbf{w}}_{B,N_B}||_2}{\sqrt{P_{\text{max}}}}]\trans$. Now the normalization guarantees that each element of $\hat{\mathbf{W}}_{\text{norm}}$ is in the range of $[0,1]$ regardless of the transmit power. Then, $||\hat{\mathbf{W}}||_0$ is approximated as
\begin{equation}f_1(\mathbf{w})\triangleq ||\hat{\mathbf{W}}_{\text{norm}}||_1=\sum_{j=1}^{\sum_{b\in\mathcal{B}}N_b}[\hat{\mathbf{W}}_{\text{norm}}]_j,\label{f1}\end{equation} where $[]_j$ denotes the $j$th element of the argument. In fact, this approach is a slightly modified version of the well-known $\ell_1/\ell_2$-regularization method in \cite{Bach-12}, because now the objective function is not penalized as such, but the approximation function is used in the denominator. Also, it makes sure that each element of $\hat{\mathbf{W}}_{\text{norm}}$ is in the interval $[0,1]$ to approximate the power consumption of the RF chains as accurately as possible.}
In this case, we want to solve
\begin{subequations} \label{EEmax:sp:convex}
\begin{eqnarray}{} \underset{\mathbf{w}}{\maxi} &  & \frac{R(\mathbf{w})}{\sum\limits_{g\in\mathcal{G}}\frac{1}{\eta}||\mathbf{w}_{g}||_2^{2} + (P_{\text{RF}}+\rho)||\hat{\mathbf{W}}_{\text{norm}}||_1  + P_{0}}  \label{eq:sp:convex:EE:obj} \\ \st
 &   & \eqref{eq:sp:IntPconstraint}.
\end{eqnarray}
\end{subequations}
Due to the convexity of the denominator in \eqref{eq:sp:convex:EE:obj}, the above problem can be solved by following the same transformations as the ones used to arrive at \eqref{eq:EEmax:reform4}, resulting in
\begin{subequations}
\label{eq:sp:EEmax:reform4}
\begin{eqnarray}
& \hspace{-20pt} \underset{\phi,\bar{\mathbf{w}},\bar{\boldsymbol\gamma},\bar{\boldsymbol\beta},\bar{\mathbf{r}}}{\maxi} &   \sum_{g\in\mathcal{G}}\bar{r}_g\label{eq:sp:EEmax:reform4:obj}\\
& \hspace{-60pt} \st  & \hspace{-35pt} \sum\limits_{g\in\mathcal{G}}\frac{1}{\eta}\frac{||\bar{\mathbf{w}}_{g}||_2^2}{\phi} + (P_{\text{RF}}+\rho)||\bar{\hat{\mathbf{W}}}_{\text{norm}}||_1 + \phi P_{0} \leq 1  \\
&  & \hspace{-35pt} \frac{||\bar{\hat{\mathbf{w}}}_{b,i}||_2^2}{\phi} \leq \phi P_{\text{max}}, \forall b \in \mathcal{B}, i \in \mathcal{N}_b \label{eq:sp:EEmax:reform4:PC}\\
&   & \hspace{-35pt} \eqref{eq:EEmax:reform4:SINRapprox}-\eqref{eq:EEmax:reform4:rateconst}
\end{eqnarray}
\end{subequations}
which is solved iteratively until convergence. After this, the antennas for which $||\hat{\mathbf{w}}_{b,i}||_2/\sqrt{P_{\text{max}}} < \epsilon$, where $\epsilon$ is a small threshold are set to zero and the algorithm is rerun for the chosen antenna sets.

\subsection{Non-Convex Relaxation of $||\hat{\mathbf{W}}||_0$}
Although being simple and widely used, a convex $\ell_1$-norm relaxation may not yield the best solution, and its efficiency to provide a sparse enough solution is highly dependent on the penalty parameter. Bearing this in mind, here we propose another relaxation based on a non-convex smoothing function. Let us define a convex function $\varphi_{b,i}(\hat{\mathbf{w}}_{b,i})\triangleq \frac{||\hat{\mathbf{w}}_{b,i}||_2}{\sqrt{P_{\text{max}}}}$. Then, we propose the following two alternative functions to approximate $||\hat{\mathbf{W}}||_0$
\begin{subequations}
\begin{eqnarray}
f_2(\mathbf{w})=\sum\limits_{b\in\mathcal{B}}\sum\limits_{i\in\mathcal{N}_b}\varphi_{b,i}(\hat{\mathbf{w}}_{b,i})^{\frac{1}{\varsigma}} \label{f2}\\
f_3(\mathbf{w})=\sum\limits_{b\in\mathcal{B}}\sum\limits_{i\in\mathcal{N}_b}\log_2(1+\varphi_{b,i}(\hat{\mathbf{w}}_{b,i})^{\frac{1}{\varsigma}})\label{f3}
\end{eqnarray}
\end{subequations}
where $\varsigma \geq 1$ is a parameter controlling the steepness of the curve of the smoothing function. That is, larger $\varsigma$ means steeper curve.
We have that $\varphi_{b,i}(\hat{\mathbf{w}}_{b,i})^{\frac{1}{\varsigma}} \to 0$ when $\frac{||\hat{\mathbf{w}}_{b,i}||_2}{\sqrt{P_{\text{max}}}}\to 0$ and $\varphi_{b,i}(\hat{\mathbf{w}}_{b,i})^{\frac{1}{\varsigma}} \to 1$  when $\frac{||\hat{\mathbf{w}}_{b,i}||_2}{\sqrt{P_{\text{max}}}} \to 1$, for every $\varsigma\geq 1$. Now, $f_2(\mathbf{w})$ and $f_3(\mathbf{w})$ are both concave for every $\varphi_{b,i}\geq 0, \varsigma \geq 1$.
Problem \eqref{EEmax:sp} is now approximated as
\begin{subequations} \label{EEmax:sp:nonconvex}
\begin{eqnarray}{} \underset{\mathbf{w}}{\maxi} &  & \frac{R(\mathbf{w})}{\sum\limits_{g\in\mathcal{G}}\frac{1}{\eta}||\mathbf{w}_{g}||_2^{2} + P_{\text{RF}}f_i(\mathbf{w}) + \rho f_i(\mathbf{w})  + P_{0}}  \label{eq:sp:nonconvex:EE:obj} \\ \st
 &   & \eqref{eq:sp:IntPconstraint}
\end{eqnarray}
\end{subequations}
Compared to previous formulation, the concavity of $f_i(\mathbf{w})$ makes the denominator nonconvex. Thus, we introduce an affine function $\hat{f}_i^{(n)}(\mathbf{w})$ as the first-order Taylor approximation of $f_i(\mathbf{w})$ at point $\mathbf{w}^{(n)}$. Following again the same idea as in \eqref{eq:sp:EEmax:reform4}, we iteratively solve
\begin{subequations}
\label{eq:sp:EEmax:reform5}
\begin{eqnarray}
& \hspace{-20pt} \underset{\phi,\bar{\mathbf{w}},\bar{\boldsymbol\gamma},\bar{\boldsymbol\beta},\bar{\mathbf{r}}}{\maxi} &   \sum_{g\in\mathcal{G}}\bar{r}_g\label{eq:sp:EEmax:reform4:obj}\\
& \hspace{-60pt} \st  & \hspace{-35pt} \sum\limits_{g\in\mathcal{G}}\frac{1}{\eta}\frac{||\bar{\mathbf{w}}_{g}||_2^2}{\phi} + (P_{\text{RF}}+\rho)\hat{f}_i^{(n)}(\frac{\bar{\mathbf{w}}}{\phi}) + \phi P_{0} \leq 1  \\
&  & \hspace{-35pt} \frac{||\bar{\hat{\mathbf{w}}}_{b,i}||_2^2}{\phi} \leq \phi P_{\text{max}}, \forall b \in \mathcal{B}, i \in \mathcal{N}_b \label{eq:sp:EEmax:reform4:PC}\\
&   & \hspace{-35pt} \eqref{eq:EEmax:reform4:SINRapprox}-\eqref{eq:EEmax:reform4:rateconst}.
\end{eqnarray}
\end{subequations}
For the sake of completeness, we summarize the sparsity-based methods in Algorithm \ref{algo:sp}. The same SOCP approximations presented in Appendix \ref{App3} similarly apply to Algorithm \ref{algo:sp}.

\begin{algorithm}[t]
\begin{algorithmic}[1] \caption{Proposed sparsity-based joint beamforming and antenna selection design.}
\label{algo:sp} \renewcommand{\algorithmicrequire}{\textbf{Initialization:}}
\REQUIRE Set $n=0$, and generate feasible initial points $(\mathbf{w}^{(n)},\boldsymbol\beta^{(n)})$.
\STATEx Phase 1:
\REPEAT
\STATE Solve \eqref{eq:sp:EEmax:reform4} (or \eqref{eq:sp:EEmax:reform5}) for $f_1(\mathbf{w})$ (or $f_2(\mathbf{w}),f_3(\mathbf{w})$) with $(\mathbf{w}^{(n)},\boldsymbol\beta^{(n)})$
and denote optimal values as $(\bar{\mathbf{w}}^*,\bar{\boldsymbol\beta}^*)$.
\STATE Update $\mathbf{w}^{(n+1)} = \bar{\mathbf{w}}^*/\phi ,\boldsymbol\beta^{(n+1)} = \bar{\boldsymbol\beta}^*/\phi$ and $\Psi_k^{(n+1)}(\mathbf{w}_{g},\beta_{k})$ for $f_1(\mathbf{w})$ (and $\hat{f}_i^{(n+1)}(\mathbf{w})$ for $f_2(\mathbf{w}),f_3(\mathbf{w})$).\label{algo:update}
\STATE $n:=n+1$.
\UNTIL convergence
\renewcommand{\algorithmicrequire}{\textbf{Output:}} \REQUIRE $\mathbf{w}_{g}^* = \tfrac{\bar{\mathbf{w}}_{g}^*}{\phi^*}, \forall g \in \mathcal{G}$
\STATEx Phase 2:
\STATE Set $\hat{\mathbf{w}}_{b,i}=0$, for all $b,i$ for which $||\hat{\mathbf{w}}_{b,i}^*||_2/\sqrt{P_{\text{max}}}<\epsilon$.
\STATE Run steps 1-5 again with the chosen antenna set to find beamformers with reduced dimensions.
\renewcommand{\algorithmicrequire}{\textbf{Output:}} \REQUIRE $\mathbf{w}_{g}^* = \tfrac{\bar{\mathbf{w}}_{g}^*}{\phi^*}, \forall g \in \mathcal{G}$
\end{algorithmic}
\end{algorithm} 

\section{Energy Efficiency and Sum Rate Trade-Offs}\label{sec:tradeoff}

Here we consider the trade-off between the energy efficiency and sum rate maximization. First, we propose a new optimization metric to which all the optimization algorithms developed in the previous section are applicable as such. Secondly, an alternative formulation based on a scalarization approach of the multi-objective optimization problem is proposed.

\subsection{Power-Weighted Energy Efficiency Maximization}

The power-weighted EE maximization problem is stated as
\begin{subequations} \label{PWEE:EEmax}
\begin{eqnarray}{} \underset{\mathbf{w},\mathbf{a}}{\maxi} &  & \frac{R(\mathbf{w})}{\kappa g(\mathbf{w},\mathbf{a}) + P_{0}}  \label{eq:PWEE:obj} \\ \st
 &   & ||\hat{\mathbf{w}}_{b,i}||_2^2 \leq a_{b,i}P_{\text{max}},  \forall b \in \mathcal{B}, i \in \mathcal{N}_b, \eqref{eq:SINRconstraints} \label{eq:PWEE:IntPconstraint}\\
 &   & a_{b,i} \in \{0,1\}, \forall b \in \mathcal{B}, i \in \mathcal{N}_b \label{eq:PWEE:binary}
\end{eqnarray}
\end{subequations}
where $g(\mathbf{w},\mathbf{a})\triangleq \sum\limits_{g\in\mathcal{G}}\frac{1}{\eta}||\mathbf{w}_{g}||_2^{2} + P_{\text{RF}}\sum\limits_{b\in\mathcal{B}}\sum\limits_{i\in\mathcal{N}_b}a_{b,i}$ is a function denoting the adjustable power consumption.
In the proposed metric, $\kappa \in [0,1]$ is a fixed parameter to control the weighting between the energy efficiency and the sum rate maximization.

The intuition behind the proposed metric is that EE can be adjusted by changing the power which then affects the achievable sum rate eventually.  More explicitly, when the weighting factor for the power consumption is decreased, the power consumption becomes less significant and thus the PWEE metric aims to increase the rate. In the considered problem formulation, the power can be adjusted both in terms of transmit power and RF chain power. When $\kappa=0$, the denominator becomes a fixed value $P_{0}$ and the problem is equivalent to maximizing the sum rate. {Similarly, when $\kappa$ is increased, it means that there is a penalty for increasing the transmit power, which results in a decrease in the sum rate. } In particular, when $\kappa=1$, the problem is equivalent to energy efficiency maximization. In summary, by varying $\kappa$ from 0 to 1, we can study the trade-off between EE and SR.

It is worth mentioning that if antenna selection is not considered, the weighting is set to transmit power only, because then the RF chain power is not adjusted. Due to the fact that $\kappa$ is fixed, the problem can be solved using exactly the same algorithms as developed in the previous section, by just adding the weight in front of the function $g(\cdot)$. In the numerical results, we use this approach together with the continuous relaxation approach, i.e., it is named as \emph{Alg. 1, PWEE}.

\begin{rem}
Conventionally, the trade-off problem has been treated by calculating the weighted sum of EE and SE \cite{Tang-14,Amin-16,He-13EESE}. The key difference in the proposed formulation is that the weighting is only for adjustable power, which means that exactly the same derived algorithm for energy efficiency maximization can also solve the trade-off problem.
\end{rem}

\subsection{Scalarization Approach}
Here we show an alternative formulation for the EE-SR trade-off problem. We present the algorithm framework for the mixed-Boolean programming based formulation, but the method can be applied to the sparsity-based formulations as well. We focus on the EE and the sum rate trade-off problem
\begin{subequations} \label{EESEmax1}
\begin{eqnarray}{} \underset{\mathbf{w},\mathbf{a}}{\maxi} &  &  [\frac{R(\mathbf{w})}{g(\mathbf{w},\mathbf{a}) + P_0},R(\mathbf{w})]  \label{eq:EEsc:obj} \\ \st
 &   & \eqref{eq:PWEE:IntPconstraint}, \eqref{eq:PWEE:binary} \label{all}.
\end{eqnarray}
\end{subequations}
The above problem is a multi-objective optimization problem with two conflicting objectives. A common method to solve this type of problem is the use a scalarization approach \cite{Ngatchou-05}. Therein, the problem is transformed to a single-objective optimization problem
\begin{subequations} \label{EESEmax2}
\begin{eqnarray}{} \underset{\mathbf{w},\mathbf{a}}{\maxi} &  &  \varrho\frac{R(\mathbf{w})}{g(\mathbf{w},\mathbf{a})+P_0} + (1-\varrho)R(\mathbf{w})  \label{eq:SEEE:obj} \\ \st
&   & \eqref{all}
\end{eqnarray}
\end{subequations}
where $\varrho \in [0,1]$ is a fixed parameter to control the weighting between energy efficiency and sum rate. However, the problem in this specific formulation is that the units of the two objectives are inconsistent and the numerical values are not comparable. To make the problem tractable, we use similar approach as that in \cite{Tang-14} to formulate the problem as
\begin{subequations} \label{EESEmax3}
\begin{eqnarray}{} \underset{\mathbf{w},\mathbf{a}}{\maxi} &  &  \frac{R(\mathbf{w})}{g(\mathbf{w},\mathbf{a})+P_0} + \varrho \frac{R(\mathbf{w})}{P_{\text{min}}}  \label{eq:SEEE:obj} \\ \st
&   & \eqref{all} 
\end{eqnarray}
\end{subequations}
where $P_{\text{min}}$ is the minimum power which is known to be consumed for transmission and $\varrho\geq 0$ is the weighting parameter. The difference of the above formulation to \cite{Tang-14} is that here we use $P_{\text{min}}$ instead of maximum possible power consumption $P_{\text{tot}}$ to scale the data rate in the objective. The reason for this is that the EE is defined both in terms of transmit power and RF chain power, which means that if $P_{\text{tot}}$ is used, the value of $\varrho \frac{R(\mathbf{w})}{P_{\text{tot}}}$ would be clearly smaller than $\frac{R(\mathbf{w})}{g(\mathbf{w},\mathbf{a})+P_0}$ even with quite large a value of $\varrho$. Thus, the objective would focus on EE and the range of $\varrho$ would be difficult to define to reasonably exploit the trade-off. Note that $P_{\text{min}}$ can be defined depending on the number of user groups, i.e., we can set $P_{\text{min}}=P_0+P_{\text{RF}}\sum_{b\in\mathcal{B}}X_b$ according to Remark \ref{rem:minAnt}. In the numerical results, we will show that the above formulation achieves a nice trade-off curve mostly in the range of $\varrho \in [0,1]$.

Because all the other constraints remain the same, they can be handled as shown previously. However, the objective function \eqref{eq:SEEE:obj} is not a conventional fractional function anymore. Thus, we equivalently reformulate it as
\begin{subequations}
\label{eq:EESEmax:reform2}
\begin{eqnarray}
\hspace{-10pt} \underset{\mathbf{w},\boldsymbol\gamma, \mathbf{v},\mathbf{a},\mathbf{r},x}{\maxi} &  &  x + \varrho \frac{\sum_{g\in\mathcal{G}}r_g}{P_{\text{min}}}\\
\st
&    & \hspace{-10pt} g(\mathbf{v},\mathbf{a}) + P_{0} \leq \frac{\sum_{g\in\mathcal{G}}r_g}{x} \label{eq:EESEmax:reform2:ncvx} \\
&  & \hspace{-10pt} \eqref{eq:EEmax:reform0:MaxPC}, \eqref{eq:EEmax:reform0:vmin}, \eqref{eq:EEmax:reform1:rate}, \eqref{eq:EEmax:reform1:minSINR}, \eqref{eq:EEmax:reform1:weakestRATE}
\end{eqnarray}
\end{subequations}
where $x$ is a new variable denoting the total energy efficiency.
To tractably reformulate non-convex constraint \eqref{eq:EESEmax:reform2:ncvx}, we replace it with the following two constraints
\begin{subequations}
\label{eq:EESEmax:reform3}
\begin{eqnarray}
g(\mathbf{v},\mathbf{a}) + P_0 \leq \frac{r^2}{x} \\
\sum_{g\in\mathcal{G}}r_g \geq r^2
\end{eqnarray}
\end{subequations}
where $r$ is a new variable representing the square root of the total sum rate.
As a result, we can express \eqref{EESEmax3} as
\begin{subequations}
\label{eq:EESEmax:reform3}
\begin{eqnarray}
 \underset{\mathbf{w},\boldsymbol\gamma, \mathbf{v},\mathbf{a},\boldsymbol\beta,\mathbf{r}, x,r}{\maxi} &  &   x + \varrho \frac{\sum_{g\in\mathcal{G}}r_g}{P_{\text{min}}} \\
 \st  & &  g(\mathbf{v},\mathbf{a}) + P_0 \leq \frac{r^2}{x} \label{eq:EESEmax:NewApprox}\\
&    &  r^2 \leq \sum_{g\in\mathcal{G}}r_g  \label{eq:EESEmax:reform3:cvx1} \\
&  &   \frac{||\hat{\mathbf{w}}_{b,i}||_2^2}{v_{b,i}} \leq a_{b,i}^\chi ,\; \forall b \in \mathcal{B}, i \in \mathcal{N}_b \label{eq:EESEmax:reform2:MaxPC}\\
&  &  \gamma_k \leq \frac{|\mathbf{h}_{b_g,k}\herm \mathbf{w}_{g}|^{2}}{\beta_k}, \forall k \in \mathcal{K} \label{eq:EEmax:reform2:quadoverlinSE}\\
&  &  0 \leq a_{b,i} \leq 1, \forall b \in \mathcal{B}, i \in \mathcal{N}_b\\
&  &  \eqref{eq:EEmax:reform0:MaxPC}, \eqref{eq:EEmax:reform0:vmin}, \eqref{eq:EEmax:reform1:minSINR}, \eqref{eq:EEmax:reform1:weakestRATE}, \eqref{eq:rep:reform2:betaconst}.
\end{eqnarray}
\end{subequations}
Now we can see that the objective function is affine and all the other constraints except \eqref{eq:EESEmax:reform2:MaxPC}, \eqref{eq:EEmax:reform2:quadoverlinSE} and newly introduced constraint \eqref{eq:EESEmax:NewApprox} are convex. Using again the idea of SCA, \eqref{eq:EEmax:reform2:quadoverlinSE} and \eqref{eq:EESEmax:reform2:MaxPC} can be approximated as in \eqref{eq:EEmax:quad_over_lin_app} and \eqref{eq:EEmax:Approx}, respectively. To approximate the right side of \eqref{eq:EESEmax:NewApprox}, we can write
\begin{eqnarray}\label{eq:EESEmax:New_app}
\frac{r^2}{x} \geq \frac{2r^{(n)}}{x^{(n)}}r
 - (\frac{r^{(n)}}{x^{(n)}})^2 x \triangleq \Delta^{(n)}(r,x).
\end{eqnarray}
With the approximations \eqref{eq:EEmax:quad_over_lin_app}, \eqref{eq:EEmax:Approx}, and \eqref{eq:EESEmax:New_app}, we solve the following convex problem at iteration $n$ of the SCA method
\begin{subequations}
\label{eq:EESEmax:reform4}
\begin{eqnarray}
& \hspace{-35pt} \underset{\mathbf{w},\boldsymbol\gamma, \mathbf{v},\mathbf{a},\boldsymbol\beta,\mathbf{r},r,x}{\maxi} &   x + \varrho \frac{\sum_{g\in\mathcal{G}}r_g}{P_{\text{min}}} \\
& \hspace{-20pt} \st  & \hspace{0pt} g(\mathbf{v},\mathbf{a}) + P_0 \leq \Delta^{(n)}(r,x)\\
&  &  \hspace{0pt} \frac{||\hat{\mathbf{w}}_{b,i}||_2^2}{v_{b,i}} \leq \Upsilon_{b,i}^{(n)}(a_{b,i}) ,\; \forall b \in \mathcal{B}, i \in \mathcal{N}_b \label{eq:EESEmax:reform4:MaxPC}\\
&  & \hspace{0pt} \gamma_k \leq \Psi_{k}^{(n)}(\mathbf{w}_{g},\beta_k), \forall k \in \mathcal{K} \label{eq:EEmax:reform4:quadoverlinSE}\\
&    & \hspace{0pt} r^2 \leq \sum_{g\in\mathcal{G}}r_g  \label{eq:EESEmax:reform5:cvx2} \\
&  & \hspace{0pt} 0 \leq a_{b,i} \leq 1, \forall b \in \mathcal{B}, i \in \mathcal{N}_b\\
&  & \hspace{0pt} \eqref{eq:EEmax:reform0:MaxPC}, \eqref{eq:EEmax:reform0:vmin}, \eqref{eq:EEmax:reform1:minSINR}, \eqref{eq:EEmax:reform1:weakestRATE}, \eqref{eq:rep:reform2:betaconst}.
\end{eqnarray}
\end{subequations}
The proposed algorithm is summarized in Algorithm \ref{algo:scalarization} for the sake of completeness. We can apply the same method to the sparsity-based formulation as well.

\begin{rem}
When $\varrho=0$ (i.e., the problem reduces to the energy efficiency maximization problem), the method derived above is an alternative solution for Algorithm \ref{algo:iterative}.
\end{rem}
We have introduced two different formulations to investigate the EE-SR trade-off problem. They come from two different views. The scalarization method is more explicit as it weights the two metrics directly, while the PWEE is more implicit by weighting the power consumption to adjust the trade-off. However we remark that both formulations offer the same performance as demonstrated in the next section.

\begin{algorithm}[t]
\begin{algorithmic}[1] \caption{Proposed joint beamforming and antenna selection design for energy efficiency and sum rate trade-off.}
\label{algo:scalarization} \renewcommand{\algorithmicrequire}{\textbf{Initialization:}}
\REQUIRE Set $n=0$, and generate feasible initial points $(\mathbf{w}^{(n)},\boldsymbol\beta^{(n)},\mathbf{a}^{(n)},r^{(n)},x^{(n)})$.
\REPEAT
\STATE Solve \eqref{eq:EESEmax:reform4} with $(\mathbf{w}^{(n)},\boldsymbol\beta^{(n)},\mathbf{a}^{(n)},r^{(n)},x^{(n)})$
and denote optimal values as $(\mathbf{w}^*,\boldsymbol\beta^*,\mathbf{a}^*,r^*,x^*)$.
\STATE Update $\mathbf{w}^{(n+1)} = \mathbf{w}^* ,\boldsymbol\beta^{(n+1)} = \boldsymbol\beta^*,\mathbf{a}^{(n+1)}=\mathbf{a}^*,r^{(n+1)}=r^*,x^{(n+1)}=x^*$ and $\Upsilon_{b,i}^{(n+1)}(a_{b,i}), \Psi_k^{(n+1)}(\mathbf{w}_{g},\beta_{k}), \Delta^{(n+1)}(r,x)$.\label{algo:update}
\STATE $n:=n+1$.
\UNTIL convergence
\renewcommand{\algorithmicrequire}{\textbf{Output:}} \REQUIRE $a_{b,i}^*, \forall b \in \mathcal{B}, i \in \mathcal{N}_b$
\STATE Set $a_{b,i}=0$, for all $b,i$ for which $a_{b,i}^*<\epsilon$.
\STATE Run steps 1-5 again with fixed $\mathbf{a}$ to find beamformers with reduced dimensions.
\renewcommand{\algorithmicrequire}{\textbf{Output:}} \REQUIRE $\mathbf{w}_{g}^*, \forall g \in \mathcal{G}$
\end{algorithmic}
\end{algorithm} 

\section{Numerical Results}
\label{sec:NumericalResults}
%
%
The performances of the developed algorithms are evaluated in a quasistatic frequency flat Rayleigh fading channel model. \textcolor{black}{We model a scenario with $B=2$ adjacent cells, where all the users are between the two BSs to account for the most severe inter-cell interference situation, as illustrated in Fig. \ref{fig: SimulationModel}. In all the other figures except \ref{fig: UnEqEEvsN} and \ref{fig: ImperfectCSI}, we set the distance from the BSs to all the users to $d=250$ meters ($d$ is the cell radius), and randomly assign them to multicasting groups of equal size. This models a scenario where the groups are very close to each other and all of the users have strong average interference. In Figures \ref{fig: UnEqEEvsN} and \ref{fig: ImperfectCSI}, where the methods are compared with existing schemes, we consider more typical scenarios where the users are randomly dropped between the cells so that the interference profiles between the users are different.}
The path loss is calculated as $30\log_{10}(d_{b,k})+35$ dB, where $d_{b,k}$ is the distance from BS $b$ to user $k$. Each of the BSs serves equal number of $G_b=U$ randomly assigned user groups with $|\mathcal{K}_g|=L$ users per group, i.e., the total number of users in the network is $K=BUL$. We assume a bandwidth of 20 MHz and noise power is set to $N_0=-125$ dBW. The antenna specific maximum power constraints are assumed to be equal for all the antennas over the whole bandwidth.  The fixed simulation parameters are summarized in Table \ref{Tab. 1} while the other parameters are given in the figures.

\begin{figure}[t]
\centering
  \includegraphics[width=0.7\columnwidth]{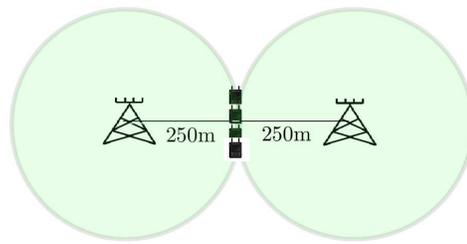}
  \caption{Simulation model.}
  \label{fig: SimulationModel}
\end{figure}

\begin{table}[tb]
\caption{Simulation Parameters}

\centering{}%
\begin{tabular}{c|c}
\hline
{Parameters}  & {Value}\tabularnewline
\hline
\hline
Path loss & $30\log_{10}\left(d\text{ [m]}\right)+35$ [dB]\tabularnewline
Cell radius  & $d=250\text{ [m]}$\tabularnewline
Active RF chain-specific power $P_{\text{RF}}$  & 0.4 W\tabularnewline
Static power consumption $P_{\text{sta}}$  & 4.5 W\tabularnewline
Per-user power consumption $P_{\text{UE}}$  & 0.1 W\tabularnewline
Power amplifier efficiency $\eta$  & 0.35\tabularnewline
Maximum antenna-specific power $P_{\text{max}}$  & 1 W\tabularnewline
Number of BSs $B$  & 2\tabularnewline
Number of groups per cell $G_{b}$  & $U$\tabularnewline
Number of users per group $K_{g}$  & $L$\tabularnewline
Number of Tx antennas $N_b$  & $N$\tabularnewline
Signal bandwidth $W$  & 20 MHz\tabularnewline
Noise power $N_0$ & -125 dBW\tabularnewline
$\epsilon$ & 0.001 \tabularnewline
\hline
\end{tabular}\vspace{-2mm}
\label{Tab. 1}
\end{table}

%
%

\textcolor{black}{\subsection{Algorithm performance}}

Fig. \ref{fig: Convergence} illustrates the average convergence of the relaxed problem \textcolor{black}{(i.e., phase 1) and the achieved EE (after phase 2) for Alg. \ref{algo:iterative} and Alg. \ref{algo:sp}. First in Fig. \ref{fig:Convergence1}, we have run Alg. \ref{algo:iterative} with $\chi=1$ and $\chi=2$. The same initial points have been used for both values of $\chi$. We can see that the convergence speed in phase 1 is fast in the considered setting for both cases. However, we observe that they converge to different solutions. Specifically, the objective value of the relaxed problem (phase 1) after convergence is higher for $\chi=1$, but the achieved EE (after phase 2)} is worse than with $\chi=2$. This is because with $\chi=1$, more antenna selection variables are non-Boolean after convergence, which results in a worse antenna selection result. Another observation is that with $\chi=2$ (which achieves better energy efficiency), the objective value returned by the relaxed problem and the achieved EE are very close to each other. This means that the solution of the relaxed problem is already very close to Boolean. The better solution is achieved with only slightly decreased convergence speed of the relaxed problem. The examples demonstrate the effectiveness of the proposed formulation. The impact of $\chi$ on the average EE is studied in the next experiment. \textcolor{black}{Fig. \ref{fig:Convergence2} then illustrates the convergence of Alg. \ref{algo:sp} for different relaxations. We can observe that both smoothing functions provide fast convergence. Note that the methods converge to different objective values, because the curves show the convergence of phase 1. It is observed that compared to Alg. \ref{algo:iterative}, the relaxation is very loose, but it still results in the same average EE after phase 2.}

\begin{figure}[t]
\centering
\subfigure[Convergence of Alg. \ref{algo:iterative}]{\label{fig:Convergence1}\includegraphics[width=0.75\columnwidth]{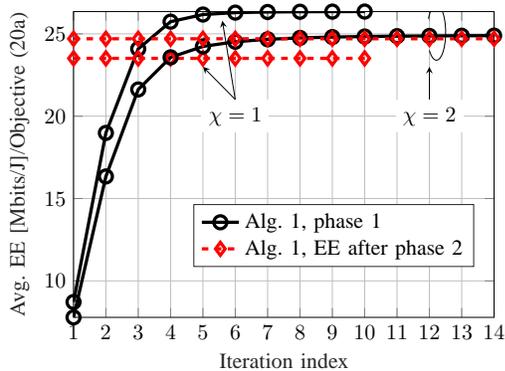}}
\subfigure[Convergence of Alg. \ref{algo:sp}]{\label{fig:Convergence2}\includegraphics[width=0.75\columnwidth]{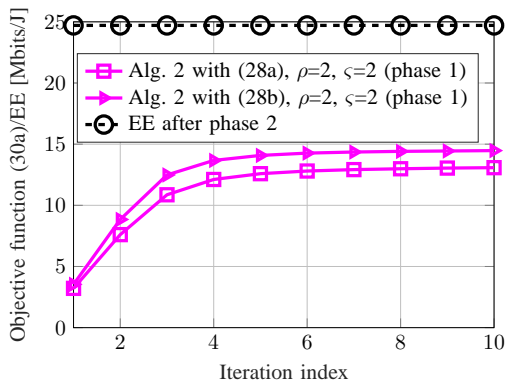}}
\caption{\textcolor{black}{Average convergence of the relaxed problem (phase 1) and achieved energy efficiency (after phase 2) of Alg. \ref{algo:iterative} and Alg. \ref{algo:sp} $N=$16, $L=$2, $U=$2, $\bar{R}=$ 46.4 Mbits/s. The flat lines denote the achieved energy efficiencies after phase 2.}}
\label{fig: Convergence}
\end{figure}


\begin{figure}[t]
\centering
  \includegraphics[width=0.75\columnwidth]{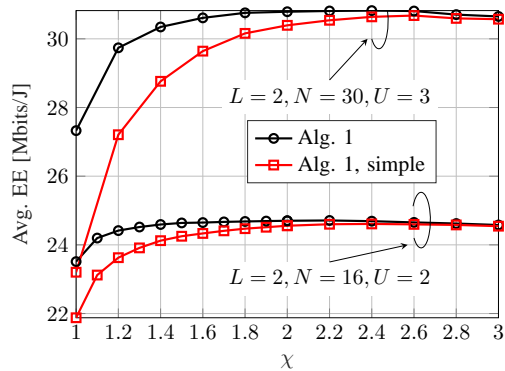}
  \caption{Average energy efficiency versus $\chi$ with $\bar{R}=$46.4 Mbits/s.}
  \label{fig: EEvsAlpha}
\end{figure}



\begin{figure}[t]
\centering
  \includegraphics[width=0.8\columnwidth]{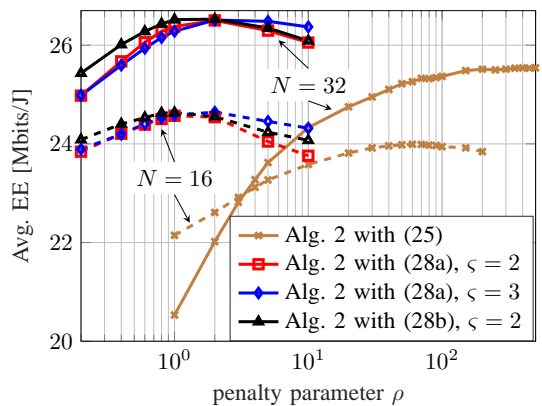}
  \caption{The effect of $\rho$ on the average energy efficiency of sparsity-based methods with $L=2$, $U=2$, $\bar{R}=$20 Mbits/s. In solid lines, $N=30$ and $N=16$ for dashed lines.}
  \label{fig: EEvsRho}
\end{figure}


\begin{figure}[t]
\centering
  \includegraphics[width=0.75\columnwidth]{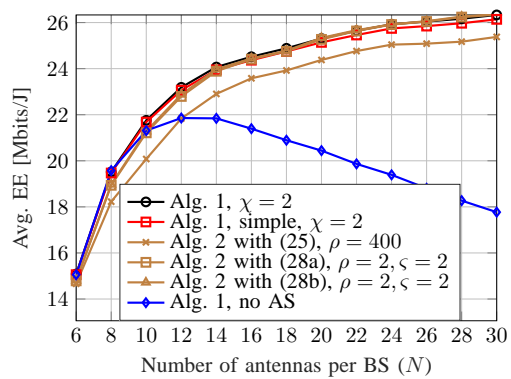}
  \caption{Average energy efficiency versus $N$ with $L=2$, $U=2$, $\bar{R}=$20 Mbits/s.}
  \label{fig: EEvsN}
\end{figure}

Fig. \ref{fig: EEvsAlpha} illustrates the effect of $\chi$ on the average energy efficiency with different simulation parameters. We also illustrate the performance of the simplified algorithm (Alg. 1 \emph{'simple'}), where the beamformers achieved from the relaxed problem (i.e., step 7 is ignored in Alg. 1) are used for transmission. We can see that the choice of $\chi$ affects the achieved energy efficiency, and the choice of the best $\chi$ depends on the system parameters. In these cases, $\chi=2.2-2.4$ gives the best performance. More importantly, the choice of $\chi$ significantly affects the performance of the simple method, implying that it is enough to use the beamformers and antennas based on the relaxed problem for transmission. This reduces the computational load because step 7 can be ignored. The fact that the performance of the simple method is very close to the original method means that the solution of the relaxed method is very close to Boolean with the correct choice of $\chi$. We also observe a saturation effect for $\chi$ implying that there exists a trade-off between the tightness of the continuous relaxation and the loss of optimality with the proposed algorithm.


Fig. \ref{fig: EEvsRho} demonstrates the effect of penalty parameter $\rho$ on the average energy efficiency of Alg. \ref{algo:sp} when using different smoothing functions to approximate $||\hat{\mathbf{W}}||_0$. First, we observe that the convex relaxation \eqref{f1} is not efficient in promoting sparsity because very high penalty parameter is required to yield the best energy efficiency. The performance is also highly dependent on the choice of $\rho$, making it difficult to estimate good $\rho$. Moreover, the achieved performance is still inferior to the other schemes, motivating the use of non-convex relaxations. We can see that all the non-convex relaxations result in approximately the same performance with the correct choice of $\rho$. Also, a good performance is achieved even for $\rho=0$, implying that the methods are efficient in promoting sparsity. It is observed that the relaxation \eqref{f3} is the best option in promoting sparsity with very small values of $\rho$.



Fig. \ref{fig: EEvsN} shows the average EE versus the number of antennas per BS. First, we see that EE starts to decrease without antenna selection when $N>12$. On the other hand, significant gains are achieved with the proposed algorithms and the gains naturally increase with the number of antennas. We can see that the simplified version of Alg. \ref{algo:iterative} is very close to the original method even when the number of antennas is large, which again motivates the effectiveness of the proposed formulation. On the other hand, as already observed in Fig. \ref{fig: EEvsN}, Alg. \ref{algo:sp} with \eqref{f1} gives clearly worse performance than the other JBAS schemes do. However, Alg. \ref{algo:sp} with \eqref{f2} and \eqref{f3} give the performance very close to Alg. \ref{algo:iterative}, although slightly worse when the number of antennas is small. The reason for this is the fixed penalty parameter, which is now optimized for the larger number of antennas, resulting in too sparse a solution with smaller numbers of antennas. This is further investigated in Fig. \ref{fig: nActive}. \textcolor{black}{The reason why \eqref{f1} is clearly worse is that it approximates $||\hat{\mathbf{W}}||_0$ by a pretty flat function, which is not very close to $||\hat{\mathbf{W}}||_0$. As it is generally known, the convex approximations (like the $\ell_1$-norm approximation) are easier to solve, but  not efficient in promoting sparsity. The non-convex ones are far better in this regard, since they approximate $||\hat{\mathbf{W}}||_0$ by a tighter function which has a shape closer to $||\hat{\mathbf{W}}||_0$. Thus, the nonconvex approximation is simply a better approximation of $||\hat{\mathbf{W}}||_0$, resulting in an improved performance.}

%
%

Fig. \ref{fig: TXpowervsN} illustrates the transmit powers versus $N$ with Alg. \ref{algo:iterative} and a method without antenna selection, with the same simulation parameters as those in Fig. \ref{fig: EEvsN}. We can see that without antenna selection, it is energy-efficient to increase total transmit power when the number of antennas increases. This is achieved by decreasing the average transmit power per antenna. However, with JBAS, both the total transmit power and transmit power per antenna decrease when $N$ increases in the considered setting, \textcolor{black}{until it starts to saturate. The reason  is that when the number of antennas grows large enough, increasing the number of antennas for a fixed number of users does not provide additional SE gain, i.e., we end up with choosing the same number of antennas on average. This saturation of the number of the active antennas can be observed in Fig. \ref{fig: nActive}. Increasing the number of antennas enables reducing the per-antenna power, meaning that  the power amplifiers for each antenna can be cheaper.} It can be concluded that the EE gains of the JBAS compared to the method without antenna selection are achieved by switching off some of the antennas but using larger per-antenna transmit power.

\begin{figure}[t]
\centering
  \includegraphics[width=0.75\columnwidth]{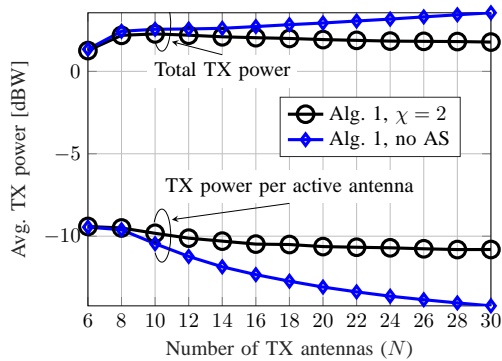}
  \caption{Total transmit power and per-antenna power versus $N$ with $U=$3, $L=$2, $\bar{R}=$ 20 Mbits/s.}
  \label{fig: TXpowervsN}
\end{figure}

Fig. \ref{fig: nActive} displays the average number of active antennas to maximize the energy efficiency versus $N$ with the same simulation parameters as those in Fig. \ref{fig: EEvsN}. It is observed that the more antennas available, the more active antennas are chosen for energy-efficient transmission. This makes sense, because when there are more antennas, there is more spatial diversity. Thus, sum rate gains due to better beamforming overwhelm the increased power consumption of activating some antennas. \textcolor{black}{Also, the optimal number of antennas saturates to a certain value when $N$ grows large, because when the number of antennas is large enough, increasing the number of antennas for a fixed number of users  provides little SE gain.} As can be seen, Alg. 2 with \eqref{f1} switches off too many antennas when the number of available antennas is small, and, on the other hand, not enough when $N$ is large. This also explains the performance gap in Fig. \ref{fig: EEvsN}. In this case, $\rho$ has been chosen to give the best result for $N=30$, which is clearly too large, when $N$ is smaller. Although being a simple method, the optimal penalty parameter has to be carefully chosen. By looking at the performance of the other schemes, we can see that the average number of active antennas for the best energy efficiency saturates to 17--18 in the considered setting. \textcolor{black}{It can be concluded that when the number of users is fixed, it is energy-efficient to use a relatively small number of active antennas if the number of available antennas is much larger than the number of users in the system.}

\begin{figure}[t]
\centering
  \includegraphics[width=0.75\columnwidth]{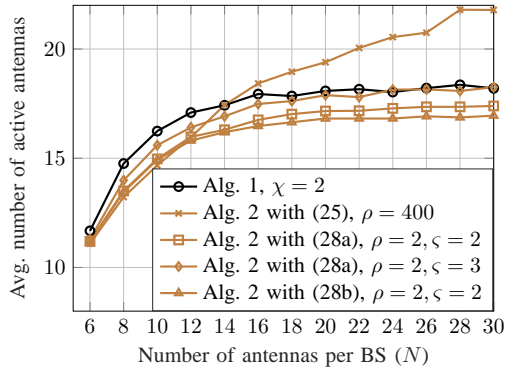}
  \caption{The average number of active antennas to maximize the energy efficiency versus $N$ with $U=$3, $L=$2, $\bar{R}=$ 20 Mbits/s.}
  \label{fig: nActive}
\end{figure}


%

Fig. \ref{fig: SOCPconv} illustrates the convergence of the SOCP approximation algorithm presented in Appendix \ref{App3}. The examples are for two different channel realizations and the same initial beamformers have been used for both algorithms. We can see that there is no significant difference in the convergence speed between the methods.

\begin{figure}[t]
\centering
  \includegraphics[width=0.75\columnwidth]{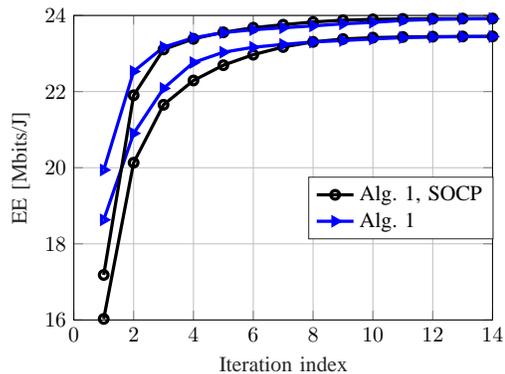}
  \caption{Convergence illustration of the SOCP approximation algorithm with $L=2$, $U=2$, $N=12$, $\bar{R}=$ 20 Mbits/s.}
  \label{fig: SOCPconv}
\end{figure}

\textcolor{black}{\subsection{Comparison to Other Schemes}}

\textcolor{black}{Here the proposed method is compared with some existing methods in the literature. Now the users are assumed to have random locations between the two BSs, and the results are averaged over user locations and channel realizations. One benchmark method is the multiuser MISO JBAS scheme in \cite{He-16JBAS}, where the SCA with semidefinite programming formulation and the $\ell_1/\ell_\infty$ norm is used to solve the JBAS problem. This method can be straightforwardly extended   to the multi-cell multiuser scenario.
Basically this means that compared to our multigroup multicast transmission, each user is served by different stream, i.e., it uses resources $L$ times more inefficiently, but can use better user-specific beamformers. Similarly we  also extend the method in \cite{He-16JBAS} to multi-cell multigroup multicasting scheme, i.e., to solve exactly the same problem. The second method in comparison is the multi-cell single-group multicasting scheme in \cite{He-15}, where the user groups inside the same cell are served by the orthogonal resources, but the neighboring cells reuse the same resources similarly. We also use the multi-cell single-group multicasting scheme using our proposed JBAS method.}

\textcolor{black}{In Fig. \ref{fig: UnEqEEvsN}, the EE is plotted for different numbers of antennas. First, it is observed that the proposed methods are significantly superior to all the other schemes. As expected, the extension of JBAS method in \cite{He-16JBAS} provides the closest performance to the proposed schemes. However, its inferiority to Alg. \ref{algo:iterative} is in line with the results observed from Fig. \ref{fig: EEvsN} already, where  Alg. \ref{algo:sp} with convex approximation provides significantly worse performance than the other schemes. Note that the method in \cite{He-16JBAS} uses convex $\ell_1/\ell_\infty$-norm, and uses semidefinite relaxation to approximate the EE problem, and then use SCA for further approximation. Thus, its complexity is higher \cite{Ben-Tal-01}, because the semidefinite relaxation dramatically increases the problem size, and the approximation is not so accurate because rank-1 solutions cannot be guaranteed \cite{Tervo-17MinPower}.}

\textcolor{black}{In Fig. \ref{fig: ImperfectCSI}, we then investigate the effect of imperfect CSI on the average performance of the methods. The imperfect CSI is modeled so that the BSs have knowledge of the noisy channel $\hat{\mathbf{h}}=\mathbf{h}+\tilde{\mathbf{h}}$, where $\mathbf{h}$ is the perfect channel and $\tilde{\mathbf{h}}$ is zero-mean complex Gaussian noise with variance $\sigma_e^2$ per element. It is observed that the performance degrades when the accuracy of the channel decreases, as expected. However, Alg. \ref{algo:iterative} still achieves clearly the best performance, and JBAS is still very useful for EE improvements. It is also interesting to note that the performance of single-group transmission mode becomes closer to multigroup mode when the accuracy decreases. The reason for this is that there is less interference due to orthogonal transmission inside the cell, which makes the interference due to imperfect channel less significant.}

\begin{figure}[t]
\centering
  \includegraphics[width=0.75\columnwidth]{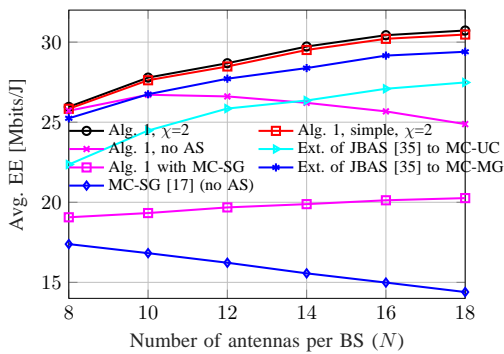}
  \caption{EE comparison to existing schemes $L=2$, $U=2$, $\bar{R}=$ 20 Mbits/s, $d=250$ m with different number of antennas.}
  \label{fig: UnEqEEvsN}
\end{figure}

\begin{figure}[t]
\centering
  \includegraphics[width=0.75\columnwidth]{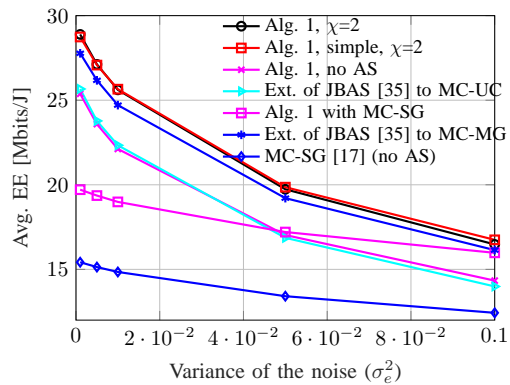}
  \caption{The effect of imperfect CSI on the EE performance with $L=2$, $U=2$, $\bar{R}=$ 20 Mbits/s.}
  \label{fig: ImperfectCSI}
\end{figure}

\textcolor{black}{\subsection{EE-SR Tradeoff}}

Fig. \ref{fig: TradeOff} plots the average EE-SR trade-off curve. The trade-off curve has been simulated by sweeping the parameter $\kappa$ from 0 to 1 in 'Alg. 1, PWEE` and $\varrho$ from 0 to 5 in Alg. 3. It is observed that a lot wider trade-off region is achieved with joint beamforming and antenna selection. It is also observed that both of the proposed algorithms give the same average performance.

\begin{figure}[t]
\centering
  \includegraphics[width=0.75\columnwidth]{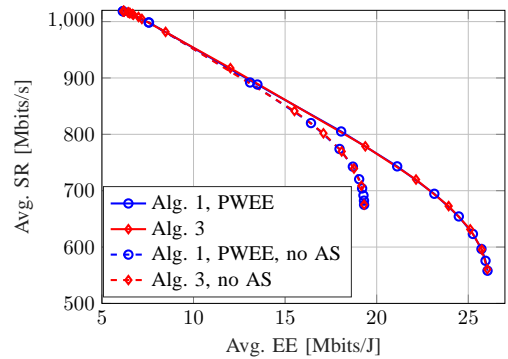}
  \caption{Average EE-SE trade-off curve $L=2$, $U=2$, $N=24$, $\chi=2$, $\bar{R}=$ 20 Mbits/s.}
  \label{fig: TradeOff}
\end{figure}

Fig. \ref{fig: nActiveEE} shows the average number of active antennas in the trade-off curve. More specifically, we can see the number of active antennas which gives certain energy efficiency. In the considered setting, activating all the antennas gives the worst energy efficiency (this is the point where the sum rate is maximized). On the other hand, the energy efficiency maximizing number of active antennas is approximately 18, which yields the minimum sum rate. \textcolor{black}{From Figs. \ref{fig: TradeOff} and \ref{fig: nActiveEE}, we can find the following
observation. Looking at the EE-maximizing point of the method without AS, the
JBAS scheme can maintain the same average sum rate as the scheme without AS, with more than 25\% increase in the EE. This gain is achieved by switching off approximately half of the RF chains.}

\begin{figure}[t]
\centering
  \includegraphics[width=0.75\columnwidth]{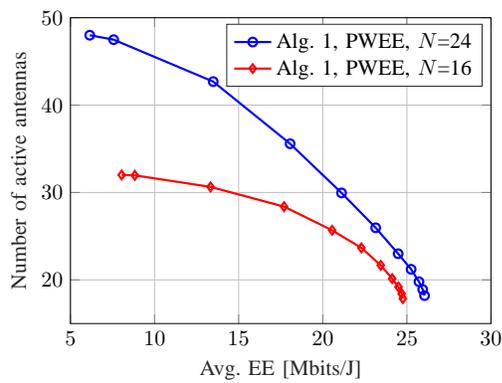}
  \caption{The average number of active antennas to achieve a certain average energy efficiency with $L=2$, $U=2$, $\chi=2$, $\bar{R}=$ 20 Mbits/s.}
  \label{fig: nActiveEE}
\end{figure}

\section{Conclusions}
\label{sec:Conclusions}
This paper has studied energy-efficient multi-cell multigroup coordinated joint beamforming and antenna selection with antenna-specific maximum power constraints and user-specific QoS constraints. Two different approaches based on mixed-Boolean programming and sparsity were proposed to solve this challenging problem. The resulting mixed-Boolean nonconvex optimization problem was tackled by a continuous relaxation and the successive convex approximation, where the antennas for which continuous antenna selection variables converge to zero are switched off. Different convex and non-convex approximations were proposed to solve the problem with sparsity-based formulation. We have also considered the trade-off between energy efficiency and sum rate by proposing two approaches to solve the problem. The numerical results have illustrated that both the continuous relaxation of mixed-Boolean program and the sparsity-based approaches provide very good performance for the considered problems. Moreover, the proposed methods can significantly improve energy efficiency over the method without antenna selection by switching off a portion of the antennas but using larger per-antenna transmit power. It is also observed that joint beamforming and antenna selection can be used to achieve wider energy efficiency and sum rate trade-off curve. \textcolor{black}{It was
also observed that when the number of users is fixed, it
is energy-efficient to use a relatively small number of active
antennas if the number of available antennas is significantly
larger than the number of users in the system. Moreover,
increasing the number of antennas significantly for a fixed number
of users is not beneficial, because the energy-efficient number
of active antennas starts to saturate. The EE-SR trade-off results also showed that the JBAS scheme can maintain the same average rate as the beamforming only method with more than 25\% increase in the energy efficiency. This gain was achieved by switching off approximately half of the antennas. As a general
conclusion, the energy-efficient beamforming strategy is not to
use very low per-antenna power, but rather switch off many
antennas and design energy-efficient beamformers for these
and increase the per-antenna power.}


\bibliographystyle{IEEEtran}


\appendices

\textcolor{black}{\section{Proof of Lemma \ref{Lemma1}}\label{App1}
Let use first prove inequality (i).
Assume $a_{b,i}^*,\hat{\mathbf{w}}_{b,i}^{*}$ is an optimal solution (for some ($b,i$)) of the continuous relaxation of \eqref{EEmax} (i.e., using $||\hat{\mathbf{w}}_{b,i}||_2^2 \leq a_{b,i}P_{\text{max}}$). Then, the denominator of the objective function \eqref{eq:EE:obj} at the optimal point is written as $\text{PC}_{\text{cont,orig}}=1/\eta\sum_{b\in\mathcal{B}}\sum_{i\in\mathcal{N}_b}||\hat{\mathbf{w}}_{b,i}^*||_2^2+\sum_{b\in\mathcal{B}}\sum_{i\in\mathcal{N}_b}a_{b,i}^*P_{\text{RF}}+P_0$. Note that in this case the constraint $||\hat{\mathbf{w}}_{b,i}||_2^2 \leq a_{b,i}P_{\text{max}}$ may not be tight. On the other hand, assume that we then use $||\hat{\mathbf{w}}_{b,i}||_2^2 \leq a_{b,i} v_{b,i}$, i.e., solve continuous relaxation of \eqref{EEmaxEqBinary}. Then, assume that $a_{b,i}^*,\hat{\mathbf{w}}_{b,i}^{*}$ (i.e., optimal solution of the relaxation of \eqref{EEmax}) is its optimal solution.
The power consumption can be written as 
$\text{PC}_{\text{cont},\chi=1}=1/\eta\sum_{b\in\mathcal{B}}\sum_{i\in\mathcal{N}_b}v_{b,i}+\sum_{b\in\mathcal{B}}\sum_{i\in\mathcal{N}_b}a_{b,i}^{*}P_{\text{RF}}+P_0$. Now, assuming that $a_{b,i}^*,\hat{\mathbf{w}}_{b,i}^{*}$ would be the optimal solution, according to the constraint $\frac{||\hat{\mathbf{w}}_{b,i}^{*}||_2^2}{a_{b,i}^{*}} \leq v_{b,i}$, we can always decrease $v_{b,i}$ compared to the original formulation to make the denominator smaller (unless $||\hat{\mathbf{w}}_{b,i}||_2^2=P_{\text{max}}$) such that $v_{b,i}^{*} = \frac{||\hat{\mathbf{w}}_{b,i}^{*}||_2^2}{a_{b,i}^{*}}$. This also contradicts the fact that $a_{b,i}^*,\hat{\mathbf{w}}_{b,i}^{*}$ are optimal for the continuous relaxation of \eqref{EEmaxEqBinary}, and implies that the optimal objective value has to be smaller for the continuous relaxation of \eqref{EEmaxEqBinary}.
Thus, because $a_{b,i}^{*}\leq 1$, it has to hold that $||\hat{\mathbf{w}}_{b,i}^{*}||_2^2\leq v_{b,i}^* =\frac{||\hat{\mathbf{w}}_{b,i}^{*}||_2^2}{a_{b,i}^{*}}$. This yields $\text{PC}_{\text{cont},\chi=1}\geq \text{PC}_{\text{cont,orig}}$, which again implies that $\text{EE}_{\text{cont},\chi=1} \leq \text{EE}_{\text{cont,orig}}$. Similarly the proof for (ii) follows from the fact that $a_{b,i}^\chi \leq a_{b,i}$, which implies that $\frac{||\hat{\mathbf{w}}_{b,i}^{*}||_2^2}{(a_{b,i}^{*})^\chi} \geq \frac{||\hat{\mathbf{w}}_{b,i}^{*}||_2^2}{a_{b,i}^{*}}$, i.e., $\text{PC}_{\text{cont},\chi=m}\geq \text{PC}_{\text{cont},\chi=1}$ and $\text{EE}_{\text{cont},\chi=m}\leq \text{EE}_{\text{cont},\chi=1}$. The inequality in (iii) is simply because the optimal value with any continuous relaxation is larger than that of the Boolean formulation.
}

\textcolor{black}{\section{Proof of Lemma \ref{Lemma2}}\label{App2}
For proving the lemma, we show that constraints \eqref{eq:EEmax:reform1:rate} and \eqref{eq:EEmax:reform1:weakestRATE}
are active at the optimality by  contradiction. Let ${\bf w}^{\ast},\boldsymbol\gamma^{\ast},\mathbf{v}^{\ast},\mathbf{a}^{\ast},\mathbf{r}^{\ast}$
be an optimal solution of \eqref{eq:EEmax:reform1} with optimal value $EE^\ast$ and suppose that \eqref{eq:EEmax:reform1:rate}
is not active at the optimum for the worst user in some group, i.e., $\gamma_k^\ast < \frac{|\mathbf{h}_{b_g,k}\herm \mathbf{w}_{g}^\ast|^{2}}{{N_0 + \sum\limits_{u \in \mathcal{G} \setminus \{g\}} |\mathbf{h}_{b_u,k}\herm \mathbf{w}_{u}^\ast|^{2}}}$ for some ${k\in\mathcal{K}_g}$. Then we can scale down the transmit power for user group
$g$ and achieve a new beamformer ${{\bf m}}_{g}$
such that $\|{{\bf m}}_{g}\|_{2}^{2}=\tau\|{\bf w}_{g}^\ast\|_{2}^{2}<\|{\bf w}_{g}^\ast\|_{2}^{2}$
for $\tau\in[0,1]$ while keeping the others  unchanged,
i.e ${{\bf m}}_{l}={\bf w}_{l}^{\ast}$ for all $l\neq g$.
In this way, we then achieve $\gamma_j^\ast < \frac{|\mathbf{h}_{b,j}\herm \mathbf{m}_{l}|^{2}}{{N_0 + \sum\limits_{u \in \mathcal{G} \setminus \{l\}} |\mathbf{h}_{b_u,j}\herm \mathbf{m}_{u}|^{2}}}$
for all $j\in\mathcal{K}\setminus \mathcal{K}_g$ since interference power at all the other users has reduced. Thus, to improve the objective, we could then increase $\gamma_j^\ast$ for all $j\in\mathcal{K}\setminus \mathcal{K}_g$ until $r_l=\min_{j\in\mathcal{K}_l}\log(1+\gamma_j), \forall l\in\mathcal{G}$.
In addition, we then also have $||\hat{\mathbf{m}}_{b_g,i}||_2^2 < a_{b_g,i}^\chi v_{b_g,i}, \forall i\in\mathcal{N}_{b_g}$ (because we have reduced the transmit power for group $g$), which means that we could reduce either $a_{b_g,i}$ or $v_{b_g,i}$ to make the denominator of \eqref{EqObj} smaller. This implies that we have $EE({\bf w}^{\ast}) < EE({\bf m})$.
Consequently, this contradicts the fact that ${\bf w}^{\ast},\boldsymbol\gamma^{\ast},\mathbf{v}^{\ast},\mathbf{a}^{\ast},\mathbf{r}^{\ast}$
is the optimal solution, completing the proof.}

{\color{black}\section{Convergence Analysis of the Iterative Algorithm for Solving \eqref{eq:EEmax:reform2}}\label{app:convergence}
Here we show that the the objective function \eqref{eq:EEmax:reform3:obj}  converges monotonically, i.e., improves at every iteration and is bounded above. Let $g_n$ be the optimal objective obtained at iteration $n$ of the proposed SCA-based algorithm, i.e., $g_n$ is the optimal objective of  \eqref{eq:EEmax:reform3}. We will show that $g_{n+1}\geq g_{n}$, i.e., the objective sequence is monotonically increasing. To this end we prove that the solution obtained at iteration $n$ is also feasible to the problem considered at iteration $n+1$.

Let us first focus on the constraint \eqref{eq:EEmax:reform3:ApproxB} and denote by $\mathbf{w}_g^{*(n)}$, $v_{b,i}^{*(n)}$, and $  a_{b,i}^{*(n)}$ the optimal values of $\mathbf{w}_{g}$, $v_{b,i}$ and  $a_{b,i}$, respectively at iteration $n$. It immediately holds that
\begin{equation}\label{eq:upperbound}
\frac{||\hat{\mathbf{w}}_{b,i}^{*(n)}||_2^2}{v_{b,i}^{*(n)}} \leq \Upsilon_{b,i}^{(n-1)}(a_{b,i}^{*(n)}) \leq  (a_{b,i}^{*(n)})^\chi
\end{equation}
where the first inequality is obvious and the second one is due to \eqref{eq:EEmax:Approx}.
At iteration $n+1$ \eqref{eq:EEmax:reform3:ApproxB} becomes
\begin{equation}
\frac{||\hat{\mathbf{w}}_{b,i}||_2^2}{v_{b,i}} \leq (1-\chi)(a_{b,i}^{*(n)})^\chi + \chi {(a_{b,i}^{*(n)})}^{(\chi-1)}a_{b,i}, \forall b \in \mathcal{B}, i \in \mathcal{N}_b
\end{equation}
Substituting $\mathbf{w}_g^{*(n)}$, $v_{b,i}^{*(n)}$, and $  a_{b,i}^{*(n)}$ into the above inequality results in
\begin{align}
\frac{||\hat{\mathbf{w}}_{b,i}^{*(n)}||_2^2}{v_{b,i}^{*(n)}} &\leq (1-\chi)(a_{b,i}^{*(n)})^\chi + \chi {(a_{b,i}^{*(n)})}^{(\chi-1)}a_{b,i}^{*(n)}\nonumber\\
&=(a_{b,i}^{*(n)})^\chi, \forall b \in \mathcal{B}, i \in \mathcal{N}_b
\end{align}
which is true due to \eqref{eq:upperbound}. That is, $\mathbf{w}_g^{*(n)}$, $v_{b,i}^{*(n)}$, and $  a_{b,i}^{*(n)}$ are feasible to  \eqref{eq:EEmax:reform3:ApproxB} at iteration $n+1$. Similarly, the same result can also proved for \eqref{eq:EEmax:reform3:approxquad}. Consequently, we can conclude that the  solution obtained at iteration $n$ is also feasible to the convex program considered at iteration $n+1$, and thus $g_{n+1}\geq g_{n}$.

The sequence $g_{n}$ is bounded from above due to the limited transmit power, and thus it is convergent.
It is difficult to comment on the convergence of the iterates (i.e., optimization variables) generated by the algorithm as the objective in \eqref{eq:EEmax:reform3:obj} is not strongly concave.
A possible way to achieve the convergence of the iterates is to introduce a sufficiently large proximal term in the objective of \eqref{eq:EEmax:reform3:obj}. However, this method is quite involved and, thus, not considered in this paper.
}

\section{Iterative SOCP Method to Solve \eqref{EEmax2}}\label{App3}
The solution proposed in Section \ref{SolveContRelax} requires solving a generic non-linear concave-convex fractional program \eqref{eq:EEmax:reform3} in each iteration. This can be equivalently transformed to a generic non-linear convex program \eqref{eq:EEmax:reform4} which is still difficult to solve efficiently due to the exponential cone in constraint \eqref{eq:EEmax:reform4:rateconst}. Here we aim at finding a more efficient formulation. Specifically, all the other constraints in \eqref{eq:EEmax:reform3} admit the second-order cone form except the log-term in \eqref{eq:EEmax:reform1:weakestRATE}. To avoid the use of log-function, we need to find a concave lower bound for $\log(1+\gamma_k)$ to fulfill the conditions of the SCA.
To this end, we use the following lower bound approximation for the concave log-function \cite{Papandriopoulos-09SCALE}
\begin{equation}
\log(1+\gamma_k) \geq -\frac{\nu_{k,1}}{\gamma_k} + \nu_{k,2} \triangleq \Xi_k^{(n)}(\gamma_k)
\end{equation}
which is tight at $\gamma_k=\gamma_k^{(n)}$, when the coefficients $\nu_{k,1}, \nu_{k,2}$ are chosen as
\begin{equation}
\nu_{k,1} = \frac{(\gamma_k^{(n)})^2}{1+\gamma_k^{(n)}}, \nu_{k,2} = \log(1+\gamma_k^{(n)}) + \frac{\gamma_k^{(n)}}{1+\gamma_k^{(n)}}
\end{equation}

As a result, we follow the description of Algorithm \ref{algo:iterative} but solve the following SOCP at step 2
\begin{subequations}
\label{eq:EEmax:SOCP}
\begin{eqnarray}
& \hspace{-20pt} \underset{\phi,\bar{\mathbf{w}},\bar{\boldsymbol\gamma},\bar{\mathbf{v}},\bar{\mathbf{a}},\bar{\boldsymbol\beta},\bar{\mathbf{r}},\bar{\boldsymbol\rho}}{\maxi} &   \sum_{g\in\mathcal{G}}\bar{r}_g\label{eq:EEmax:reform5:obj}\\
& \hspace{-60pt} \st  & \hspace{-35pt} \sum\limits_{b\in\mathcal{B}}\sum\limits_{i\in\mathcal{N}_b}(\frac{1}{\eta}\bar{v}_{b,i} + P_{\text{RF}}\bar{a}_{b,i}) + \phi P_{0} \leq 1\label{eq:EEmax:reform5:C1}  \\
&  & \hspace{-35pt} \frac{||\bar{\hat{\mathbf{w}}}_{b,i}||_2^2}{\bar{v}_{b,i}} \leq \phi\Upsilon_{b,i}^{(n)}(\frac{\bar{a}_{b,i}}{\phi}), \forall b \in \mathcal{B}, i \in \mathcal{N}_b\label{eq:EEmax:reform5:C2}\\
&  & \hspace{-35pt} \bar{v}_{b,i} \leq \phi P_{\text{max}}, \forall b \in \mathcal{B}, i \in \mathcal{N}_b\label{eq:EEmax:reform5:C3} \\
&  & \hspace{-35pt} 0 \leq \bar{a}_{b,i} \leq \phi, \forall b \in \mathcal{B}, i \in \mathcal{N}_b\label{eq:EEmax:reform5:C4}\\
&  & \hspace{-35pt} \bar{\gamma}_k \leq \phi\Psi_k^{(n)}(\frac{\bar{\mathbf{w}}_{g}}{\phi},\frac{\bar{\beta}_{k}}{\phi}), \forall k \in \mathcal{K}\label{eq:EEmax:reform5:C5}\\
&  & \hspace{-35pt} \bar{r}_g \geq \phi \max_{k\in\mathcal{K}_g}(\bar{R}_k), \forall g \in \mathcal{G}\label{eq:EEmax:reform5:C6} \\
&  & \hspace{-35pt} \phi\bar{\beta}_k \geq {\phi^2 N_0 + \sum\limits_{u \in \mathcal{G} \setminus \{g\}} |\mathbf{h}_{b_u,k}\herm \bar{\mathbf{w}}_{u}|^{2}}, \forall k \in \mathcal{K}\label{eq:EEmax:reform5:C7}\\
&  & \hspace{-35pt} \bar{r}_g \leq \phi \Xi_k^{(n)}(\frac{\bar{\gamma}_k}{\phi}), \forall g \in \mathcal{G}, k \in \mathcal{K}_g.\label{eq:EEmax:reform5:C8}
\end{eqnarray}
\end{subequations}
In the above problem, all the other constraints are linear except \eqref{eq:EEmax:reform5:C2} and \eqref{eq:EEmax:reform5:C7} which can be expressed as $||\mathbf{y}_1||^2 \leq y_2y_3$, where $\mathbf{y}_1$ is some vector, and $y_2,y_3$ are scalars. These constraints are equivalently written in the SOC form as
\begin{equation}\label{generalform}
||\mathbf{y}_1\trans, 1/2(y_2 - y_3)||_2 \leq 1/2(y_2 + y_3).
\end{equation}
\textcolor{black}{More specifically, let us first consider \eqref{eq:EEmax:reform5:C2} and denote $x_{b,i}^{(n)}\triangleq (1-\chi)(a_{b,i}^{(n)})^\chi, z_{b,i}^{(n)}\triangleq \chi(a_{b,i}^{(n)})^{(\chi-1)}$ as the fixed coefficients in $\Upsilon_{b,i}^{(n)}(\frac{\bar{a}_{b,i}}{\phi})$. Then, we can write it equivalently as $||\bar{\hat{\mathbf{w}}}_{b,i}||_2^2\leq \bar{v}_{b,i}(\phi x^{(n)}+\bar{a}_{b,i}z^{(n)})$. Thus, according to \eqref{generalform}, we can write $\mathbf{y}_1=\bar{\hat{\mathbf{w}}}_{b,i}, y_2=v_{b,i}, y_3=\phi x_{b,i}^{(n)}+\bar{a}_{b,i}z_{b,i}^{(n)}$, and substituting these to \eqref{generalform} is equivalent to \eqref{eq:EEmax:reform5:C2}. On the other hand, \eqref{eq:EEmax:reform5:C7} is readily in a form $||\mathbf{y}_1||^2 \leq y_2y_3$, where $||\mathbf{y}_1||^2=\phi^2 N_0 + \sum\limits_{u \in \mathcal{G} \setminus \{g\}} |\mathbf{h}_{b_u,k}\herm \bar{\mathbf{w}}_{u}|^{2}, y_2=\phi, y_3=\bar{\beta}_k$. In this case, $\mathbf{y}_1=[\phi\sqrt{N_0}, I_{1,k},\ldots,I_{g-1,k},I_{g+1,k},\ldots,I_{G,k}]\trans$, where $I_{u,k}=\mathbf{h}_{b_u,k}\herm \bar{\mathbf{w}}_{u}$ is a scalar.}

\textcolor{black}{\section{Complexity Analysis}\label{App4}
Here we provide a complexity comparison based on the worst-case analysis presented in \cite{Ben-Tal-01}. The worst-case complexity of Alg. 1 depends on the number of variables and can be upper bounded as
\begin{eqnarray}
&  &
Q_1\mathcal{O}(\sum\limits_{b\in\mathcal{B}}2N_bG_b+2\sum\limits_{b\in\mathcal{B}}N_b+2K+G+1)^4\\
&  & + Q_2\mathcal{O}(\sum\limits_{b\in\mathcal{B}}2\bar{N}_bG_b+2K+G+1)^4
\end{eqnarray}
where $Q_1$ and $Q_2$ are the number of performed iterations in the relaxed problem (steps 1-5) and lower-dimensional problem for fixed antenna set (steps 6-7), respectively, $\bar{N}_b$ is the number of antennas selected for transmission, $K$ is the total number of users, and $G$ is the total number of groups.
Taking the dominant terms, it can be approximated as
\begin{equation}
Q_1\mathcal{O}(\sum_{b\in\mathcal{B}}N_bG_b+K)^4 + Q_2\mathcal{O}(\sum_{b\in\mathcal{B}}\bar{N}_bG_b+K)^4
\end{equation}
where the dominant term depends on the relation between $N_bG_b$ and $K$. Note that Alg. 1 `simple' reduces the complexity compared to Alg. 1 so that the second term in the above equation is ignored. The complexity reduction depends on the number of iterations and complexity to solve the lower-dimensional problem in steps 6-7 of Alg. 1.
We note that the above
upper bound for the complexity is quite conservative as a solution can be found
much faster in reality.
On the other hand, the worst-case complexity of solving the SOCP in Appendix \ref{App3} can be written as
\begin{eqnarray}
& & Q(\mathcal{O}(\sum_{b\in\mathcal{B}}N_bG_b+K)^{3}\\
&  & +\mathcal{O}((\sum_{b\in\mathcal{B}}N_bG_b+K)(\sum_{b\in\mathcal{B}}N_b+K)))
\end{eqnarray}
where $Q$ is the number of iterations. This algorithm provides a sharp complexity reduction compared to the generic formulation.}

\end{document}